\DeclareMathAlphabet{\mathpzc}{OT1}{pzc}{m}{it}
\newcommand{\N}{\mathbb{N}}
\newcommand{\R}{\mathbb{R}}
\newcommand{\B}{\mathbb{B}}
\newcommand{\A}{\mathbb{A}}
\newcommand{\set}[1]{\{ #1 \}}
\newcommand{\trs}{\xrightarrow}
\newcommand{\rlabel}[1]{\; \text{\scriptsize{(#1)}}}
\newcommand{\mabc}[1]{\llbracket #1 \rrbracket}
\newcommand{\bamc}[1]{\llparenthesis #1 \rrparenthesis}
\newcommand{\Mid}{\; \mid \;}
\newcommand{\tuple}[1]{\langle #1 \rangle}
\newcommand{\mpar}[2]{{}_{#1}\kern-2pt\varobar_{#2}}
\newcommand{\mtimes}[2]{{}_{#1}\kern-1pt\otimes_{#2}}
\newcommand{\mAt}[2]{@^{#1}_{#2}}
\newcommand{\sys}{\mathsf{sys}}
\newcommand{\mem}{\mathsf{mem}}
\newcommand{\ph}{\text{\rm ph}}
\newcommand{\coph}{\ph^{\bot}}
\newcommand{\ex}{\text{\rm ex}}
\newcommand{\mset}{\mathpzc}
\hfill\rule{1pt}{4pt}%
\nopagebreak\rule{1pt}{4pt}\leaders\hrule height1pt\hfill\rule{1pt}{4pt}
\newtheoremstyle{theorem}%
{\topsep}%
{\topsep}%
{\it}%
{}%
{\bfseries}%
{.}%
{0.5em}%
{\thmname{#1}\thmnumber{ #2}\thmnote{ (#3)}}%
\theoremstyle{theorem} %
\newtheorem{theorem}{Theorem}[section]
\newtheorem{proposition}[theorem]{Proposition}
\newtheorem{lemma}[theorem]{Lemma}
\newtheorem{definition}[theorem]{Definition}
\title{Implementing the Stochastics Brane Calculus\\ in a
  Generic Stochastic Abstract Machine%
}
\author{Marino Miculan \qquad Ilaria Sambarino
\institute{Department of Mathematics and Computer Science, University of Udine, Italy}
\email{marino.miculan@uniud.it \qquad\qquad ilaria.sambarino@gmail.com}
}
\begin{document}
\maketitle

\begin{abstract}
  In this paper, we deal with the problem of implementing an abstract
  machine for a stochastic version of the Brane Calculus.  Instead of
  defining an \emph{ad hoc} abstract machine, we consider the generic
  stochastic abstract machine introduced by Lakin, Paulev\'e and
  Phillips. The nested structure of membranes is flattened into a set
  of species where the hierarchical structure is represented by means
  of \emph{names}.  In order to reduce the overhead introduced by this
  encoding, we modify the machine by adding a \emph{copy-on-write}
  optimization strategy.  We prove that this implementation is
  adequate with respect to the stochastic structural operational
  semantics recently given for the Brane Calculus.  These techniques
  can be ported also to other stochastic calculi dealing with nested
  structures.
\end{abstract}

\section{Introduction}

A fundamental issue in Systems Biology is modelling the membrane
interaction machinery.  Several models have been proposed
in the literature \cite{lt:biokappa,rpscs:bioambients,bmsmt:cls}; among
them, the Brane Calculus \cite{cardelli04:bc} has been arisen as
a good model focusing on abstract membrane interactions, still being
sound with respect to biological constraints (e.g.~bitonality).  In
this calculus, a process represents a system of nested compartments,
where active components are \emph{on} membranes, not inside them.
This reflects the biological evidence that functional molecules
(proteins) are embedded in membranes, with consistent orientation.

In the original definition of the Brane Calculus \cite{cardelli04:bc}
(which we will recall in Section~\ref{sec:bc}) membranes interact
according to three basic reaction rules corresponding to
\emph{phagocytosis}, \emph{endo/exocytosis}, and \emph{pinocytosis}.
However, this semantics does not take into account quantitative
aspects, like stochastic distributions, which are important for, e.g.,
implementing stochastic simulations.

A stochastic semantics for the Brane Calculus has been provided in
\cite{bm:tcs12}, following an approach pioneered in \cite{cm:quest10}
(but see also \cite{hermanns02,prakashbook09} for Markov processes).
Instead of giving a stochastic version $P\trs{a,r} Q$ of the reaction
relation, in this semantics each process is given a \emph{measure} of
the stochastic distribution of the possible outcomes.  More precisely,
we define a relation $P\to \mu$ associating to a process $P$ an
action-indexed family of measures $\mu$: for an action $a$, the
measure $\mu_a$ specifies for each measurable set $S$ of processes,
the rate $\mu_a(S)\in \R^+$ of $a$-transitions from $P$ to (elements
of) $S$.  An advantage of this approach is that we can apply results
from measure theory for solving otherwise difficult issues, like
instance-counting problems; moreover, process measures are defined
\emph{compositionally}, and in fact the relation $P\to \mu$ can be
characterized by means of a set of rules in a GSOS-like format. We
will recall this stochastic semantics and its main properties in
Section~\ref{sec:sgsos}.

In this paper, we use this new semantics for defining a
\emph{stochastic abstract machine} for the Brane Calculus, so that it
can be effectively used for \emph{in silico} simulations of membrane
systems. Defining an \emph{ad hoc} abstract machine for the Brane
Calculus would be a complex task; instead, we take advantage of the
\emph{generic abstract machine for stochastic process calculi} (GSAM
for short) introduced in \cite{pylp:cmsb10,lpp:tcs12} as a general
tool for simulating a broad range of calculi.  This machine can be
instantiated to a particular calculus by defining a function for
transforming a process of the calculus to a set of \emph{species}, and
another for computing the set of possible reactions between species.

An important aspect is that this abstract machine does not have a
native notion of compartment, which is central in the Brane Calculus
(as in any other model of membranes).  To overcome this problem, we
adopt a ``flat'' representation of membrane systems, used also in
\cite{lpp:tcs12}, where the hierarchical structure is represented by
means of \emph{names}: each name represents a compartment, and each
species is labelled with the name of the compartment where it is
located, and the name of its inner compartment (if any). So names and
species are the nodes and the arcs of the tree, respectively.  This
technique can be used for representing any system with a tree-like
structure of compartments.

However, this approach does not scale well, as the population of
species may grow enormously: for instance, a population of $n$
identical cells would lead to $n$ species, all differing only for the
name of its inner compartment, instead of a single specie with
multiplicity $n$.  For circumventing this problem, in
Section~\ref{sec:gsam} we introduce a variant of the GSAM with a
\emph{copy-on-write} optimization strategy---hence called COWGSAM.
The idea is to keep a single copy of each species, with its
multiplicity; when a reaction has to be applied, fresh copies of the
compartments involved are generated on-the-fly, and reactions and
rates are updated accordingly.  In this way, the hierarchical
structure is unfolded only if and when needed.

In Section~\ref{sec:translation} we show how the Brane Calculus can be
represented in the COWGSAM, and we will prove that the abstract
machine obtained in this way is adequate with respect to the
stochastic semantics of the Brane Calculus; in this proof, we take
advantage of the compositional definition of this semantics.

Conclusions and final remarks are in Section~\ref{sec:concl}.

\section{Brane Calculus}
\label{sec:bc}

In this section we recall Cardelli's Brane Calculus
\cite{cardelli04:bc} focusing on its basic version (without
communication primitives, complexes and replication).

First, let us fix the notation we will use hereafter.  Let $S$ be a
set of \emph{sorts} (or ``types''), ranged over by $s,t$, and $T$ a
set of $S$-sorted terms; for $t \in S$, $T_{t} \subseteq T$ denotes
the set of terms of sort $t$.  For $A$ a set of symbols, $A^{*}$
denotes the set of finite words (or lists) over $A$, and
$\tuple{a_{1}, \dots, a_{n}}$ denotes a word in $A^{*}$.  For a word
$\tuple{t_{1}, \dots, t_{n}}$ in $S^{*}$, we define $T_{\tuple{t_{1},
    \dots, t_{n}}} \triangleq T_{t_{1}} \times \dots \times
T_{t_{n}}$.

\paragraph{Syntax}
\label{sec:syntax}
The sorts and the set $\B$ of terms of Brane Calculus are the following:
\begin{align*}
&\text{Sorts} :: S & t \mathrel{::=} {} & \sys \Mid \mem 
\\[0.5ex]
&\text{Membranes} :: \B_\mem  & \sigma,\tau \mathrel{::=} {} 
	& \zero \Mid \sigma|\tau \Mid 
	    \phago_{n}.\sigma \Mid \cophago_{n}(\tau).\sigma \Mid 
           \exo_{n}.\sigma \Mid \coexo_{n}.\sigma \mid \pino_{n}(\tau).\sigma
\\[0.5ex]
&\text{Systems} :: \B_\sys & P,Q \mathrel{::=} {} &
    \void \Mid P \comp Q \Mid \cell{\sigma}{P}
\end{align*}
The subscripted names $n$ are taken from a countable set $\Lambda$.
By convention we shall use $M$, $N$, \dots\ to denote generic Brane 
Calculus terms in $\B$.

A \emph{membrane} can be either the empty membrane $\zero$, or the
parallel composition of two membranes $\sigma | \tau$, or the
action-prefixed membrane $\epsilon.\sigma$.  Actions are:
\emph{phagocytosis} $\phago$, \emph{exocytosis} $\exo$, and
\emph{pinocytosis} $\pino$.  Each action but pinocytosis comes with a
matching co-action, indicated by the superscript $^\bot$.

A \emph{system} can be either the empty system $\void$, or the
parallel composition $P \comp Q$, or the system nested within a
membrane $\cell{\sigma}{P}$.  Notice that, differently
from~\cite{cardelli04:bc}, pino actions are indexed by names in
$\Lambda$.  In~\cite{cardelli04:bc}, names are meant only to pair up
an action with its corresponding co-action, hence a pino action does
not need to be indexed by any name.  Actually, names can be thought of
as an abstract representation of particular protein conformational
shapes; hence, each name can correspond to a different biological
behaviour. Therefore, if we want to observe also kinetic properties of
processes, it is important to keep track of names in pino actions.

Terms can be rearranged according to a structural congruence
relation; the intended meaning is that two congruent
terms actually denote the same system.
Structural congruence $\equiv$ is the smallest equivalence relation 
over $\B$ which satisfies the axioms and rules listed below. 
\begin{gather*}
P\comp Q \equiv Q \comp P \qquad
P\comp(Q\comp R) \equiv (P\comp Q)\comp R \qquad
P\comp \void \equiv P
\\[1ex]
\sigma|\tau \equiv \tau | \sigma \qquad
\sigma |(\tau | \rho) \equiv (\sigma | \tau) | \rho \qquad
\sigma | \zero \equiv \sigma
\\[1ex]
\cell{\zero}{\void} \equiv \void \qquad
\rl{P \comp R \equiv Q \comp R}{P \equiv Q} \qquad
\rl{\sigma | \rho \equiv \tau | \rho}{\sigma \equiv \tau} \qquad
\rl{\cell{\sigma}{P} \equiv \cell{\tau}{Q}}{
  P \equiv Q  \quad \sigma \equiv \tau}
\\[1ex]
\rl{\alpha.\sigma \equiv \alpha.\tau}{
  \alpha \in \set{\phago_{n}, \exo_{n}, \coexo_{n}}_{n \in \Lambda} 
  \quad 
  \sigma \equiv \tau} 
\qquad
\rl{\beta(\rho).\sigma \equiv \beta(\nu).\tau}{
  \beta \in \set{\cophago_{n}, \pino_{n}}_{n \in \Lambda} 
  \quad
  \rho \equiv \nu
  \quad
  \sigma \equiv \tau} 
\end{gather*}
Differently from \cite{cardelli04:bc}, we allow to
rearrange also the sub-membranes contained in co-phago and pino
actions (by means of the last inference rule above).

\paragraph{Reduction Semantics}
\label{sec:redsemantics}

The dynamic behaviour of Brane Calculus is specified by means of a
reduction semantics, defined over a reduction relation (``reaction'')
$\react \subseteq \B_\sys \times \B_\sys$, whose rules are listed in
Table~\ref{tbl:redrules}.
\begin{table}[t]
\hrule
\begin{gather*}
\rl{\cell{\cophago_n(\rho).\tau | \tau_0}{Q} \comp \cell{\phago_n.\sigma | \sigma_0}{P} \react \cell{\tau | \tau_0}{\cell{\rho}{\cell{\sigma | \sigma_0}{P}} \comp Q}}{}
\rlabel{red-phago}
\\[0.5ex]
\rl{\cell{\coexo_n.\tau | \tau_0}{\cell{\exo_n.\sigma | \sigma_0}{P} \comp Q} \react \cell{\sigma | \sigma_0 | \tau | \tau_0}{Q} \comp P}{}
\rlabel{red-exo}
\\[0.5ex]
\rl{\cell{\pino(\rho).\sigma | \sigma_0}{P} \react \cell{\sigma | \sigma_0}{\cell{\rho}{\void} \comp P}}{}
\rlabel{red-pino}
\qquad\qquad
\rl{\cell{\sigma}{P} \react \cell{\sigma}{Q}}{P \react Q}
\rlabel{red-loc}
\\[0.5ex]
\rl{P \comp R \react Q \comp R}{P \react Q}
\rlabel{red-comp}
\qquad\qquad
\rl{P \react Q}{P \equiv P' \quad P' \react Q' \quad Q'\equiv Q}
\rlabel{red-equiv}
\end{gather*}
\hrule
\caption{Reduction semantics for the Brane Calculus.}
\label{tbl:redrules}
\end{table}
Notice that the presence of (red-phago/exo/pino) and (red-equiv) makes
this not a \emph{structural} presentation, since these rules are not
primitive recursive in the syntax (i.e., structural recursive) as
required by the SOS format.

\section{Stochastic Structural Operational Semantics for the Brane Calculus}\label{sec:sgsos}

In this section we recall the stochastic structural operational
semantics for the Brane Calculus, as defined in \cite{bm:tcs12}.
Following \cite{cm:quest10}, we replace the classic ``pointwise''
rules of the form $P \trs{a,r} P'$ with rules of the form $P \to \mu$,
where $\mu$ is an indexed class of \emph{measures} on the
\emph{measurable space} of processes. We assume the reader to be
familiar with basic notions from measure theory; for a brief summary,
see Appendix~\ref{app:measuretheory}.

The set of action labels for the Brane Calculus will be denoted by
$\A$ and can be partitioned with respect to the source sort (i.e.,
either systems or membranes), as follows:
\begin{align*}
 \A_{\sys}  \triangleq {} &
     \{id:\sys\to\tuple{\sys}\} \cup 
     \{\ph_n:\sys\to\tuple{\sys,\sys} \mid  n \in \Lambda\} \cup {} \\
 &  \{\coph_n:\sys\to\tuple{\mem,\mem,\sys,\sys} \mid  n \in \Lambda\} \cup {} \\
 &  \{\ex_n:\sys\to\tuple{\mem,\sys,\sys} \mid  n \in \Lambda\}
 \\
 \A_{\mem} \triangleq {} &
     \set{\phago_n,\exo_n,\coexo_n:\mem\to\tuple{\mem}   \mid n \in \Lambda}\cup{}\\
 &  \set{\cophago_n,\pino_n:\mem\to\tuple{\mem,\mem}  \mid n \in \Lambda} 
\end{align*}
Let $a$ range over $\A$, and $ar(a)$ denote its arity.  To ease the
reading in the following we will use the notation $\Delta_{a}(T,
\Sigma)$ to denote the set of measures $\Delta(T_{\tuple{t_{1}, \dots
    t_{n}}}, \bigotimes_{i=1}^{n} \Sigma_{t_{i}})$, for $ar(a) = t \to
\tuple{t_{1}, \dots, t_{n}}$.

Let $\B/_{\equiv}$ be the set of $\equiv$-equivalence classes on $\B$.
For $M \in \B$, we denote by $[M]_{\equiv}$ the $\equiv$-equivalence
class of $M$ (sometimes dropping the equivalence symbol when clear
from the context).

\begin{definition}[Measurable space of terms]\label{def:termspace}
The \emph{measurable space of terms} $(\B, \Pi)$ is given by 
the measurable space over $\B$ where $\Pi$ is the $\sigma$-algebra 
generated by $\B/_{\equiv}$.
\end{definition}

Notice that $\B/_{\equiv}$ is a denumerable partition of $\B$, hence
it is a base (a generator such that all its elements are disjoint) for
$\Pi$.  Any element of $\Pi$ can be obtained by a countable union of
elements of the base, i.e., for all $\mset{M}\in \Pi$ there exist
$\set{M_i}_{i\in I}$, for some countable $I$, such that $\mset{M} =
\bigcup_{i\in I} [M_i]_\equiv$. As a consequence, in order to generate
the whole $\Pi$ we can simply compute all these unions, without the
need of any closure by complement.

A similar argument holds for the product space $(\B_{\tuple{t_{1},
    \dots, t_{n}}}, \bigotimes_{i=1}^{n} \Pi_{t_{i}})$, where $t_{i}
\in \set{\mem, \sys}$ ($1 \leq i \leq n$); indeed
$\bigotimes_{i=1}^{n} \Pi_{t_{i}}$ can be generated from the base
$\B_{\tuple{t_{1}, \dots, t_{n}}}/_{\equiv_{\tuple{t_{1}, \dots,
      t_{n}}}}$, where $\equiv_{\tuple{t_{1}, \dots, t_{n}}} \subseteq
\B_{\tuple{t_{1}, \dots, t_{n}}} \times \B_{\tuple{t_{1}, \dots,
    t_{n}}}$ is defined by
\begin{align*}
  \tuple{M_{1}, \dots, M_{n}} 
  &\equiv_{\tuple{t_{1}, \dots, t_{n}}} 
  \tuple{N_{1}, \dots, N_{n}}
  &\text{iff}&&
  M_{i} &\equiv N_{i}, \text{ for all $1 \leq i \leq n$} \,, 
\end{align*}
which can be easily checked to be an equivalence relation.
$\equiv_{\tuple{t_{1}, \dots, t_{n}}}$-equivalence classes are
rectangles, i.e.\ $[\tuple{M_{1}, \dots,
  M_{n}}]_{\equiv_{\tuple{t_{1}, \dots, t_{n}}}} = [M_{1}]_{\equiv}
\times \dots \times [M_{n}]_{\equiv}$, therefore the product measure
$\bigotimes_{i=1}^{n} \Pi_{t_{i}}$ is well defined.  For sake of
simplicity in the following we write $[\tuple{M_{1}, \dots,
  M_{n}}]_{\equiv}$ in place of $[\tuple{M_{1}, \dots,
  M_{n}}]_{\equiv_{\tuple{t_{1}, \dots, t_{n}}}}$, and
$\B_{\tuple{t_{1}, \dots, t_{n}}}/_{\equiv}$ in place of
$\B_{\tuple{t_{1}, \dots, t_{n}}}/_{\equiv_{\tuple{t_{1}, \dots,
      t_{n}}}}$.

\begin{table}[t]
\hrule
\begin{gather*}
\rl{\zero \to_\mem \omega_{\mem}}{}
\rlabel{zero}
\qquad
\rl{\alpha.\sigma \to_\mem [\alpha]_{\sigma}}{
  \alpha \in \set{\phago_n, \exo_n, \coexo_n \mid n \in \Lambda}}
\rlabel{pref}
\\[2ex]
\rl{\beta(\tau).\sigma \to_\mem [\beta]_{\sigma}^{\tau}}{
  \beta \in \set{\cophago_n, \pino_n \mid n \in \Lambda}}
\rlabel{pref-arg}
\qquad
\rl{\sigma | \tau \to_\mem \mu' \mpar{\sigma}{\tau} \mu''}{
\sigma \to_\mem \mu' \qquad \tau \to_\mem \mu''}
\rlabel{par}
\\[2ex]
\rl{ \void \to_\sys \omega_{\sys}}{}
\rlabel{void}
\qquad
\rl{ \cell{\sigma}{P} \to_\sys \mu \mAt{\sigma}{P} \nu}{
  \sigma \to_\mem \nu \qquad P \to_\sys \mu}
\rlabel{loc}
\qquad
\rl{ P \comp Q \to_\sys \mu' \mtimes{P}{Q} \mu''}{
  P \to_\sys \mu' \qquad Q \to_\sys \mu''}
\rlabel{comp}
\end{gather*}
\hrule
\caption{Stochastic structural operational semantics for Brane Calculus}
\label{tbl:stochSOS}
\end{table}

The operational semantics associates with each membrane a family of
measures in $\Delta^{\A_\mem}(\B, \Pi)$, and with each system a family
of measures in $\Delta^{\A_\sys}(\B, \Pi)$.  This can be represented
by two relations $\to_\mem: T_\mem \to \Delta^{\A_\mem}(\B, \Pi)$,
$\to_\sys: T_\sys \to \Delta^{\A_\sys}(\B, \Pi)$, defined by the SOS
rules listed in Table~\ref{tbl:stochSOS}. (In the following, for sake
of readability, we will drop the indexes $_\mem, _\sys$).  In these
rules we use some constants and operations over indexed families of
measures, that we define next.  For a set (of labels) $A$, let us
denote by $\Delta^A(\B,\Pi)$ the set $\prod_{a \in A} \Delta_{a}(\B,
\Pi)$ of $A$-indexed families of measures over $(\B,\Pi)$.  Given a
family of measures $\mu \in \Delta^A(\B,\Pi)$ and $a\in A$, the
$a$-component of $\mu$ will be denoted as $\mu_a \in \Delta_{a}(\B,
\Pi)$.

\begin{description}[leftmargin=0pt]

\item[Null:] Let $\omega_{\mem} \in \Delta^{\A_\mem}(\B, \Pi)$ be the
  constantly zero measure, i.e., for all $a {\in} \A_{\mem}$ such that
  $ar(a) = t \to \tuple{t_{1}, \dots, t_{n}}$ and $\mset{M} \in
  \bigotimes_{i=1}^{n} \Pi_{t_{i}}$: $(\omega_{\mem})_a(\mset{M}) {=}
  0$.

\item[Prefix:] For arbitrary $n \in \Lambda$, 
$\alpha \in \set{\phago, \exo, \coexo}$, and
$\beta \in \set{\cophago, \pino}$, let the constants
$[\alpha_{n}]_{\sigma},[\beta_{n}]_{\sigma}^{\tau} \in
\Delta^{\A_\mem}(\B, \Pi)$ be defined, for arbitrary $X,Y \in
\B_{\mem}/_{\equiv}$, as follows:
\begin{align*}
  ([\alpha_{n}]_{\sigma})_{\alpha_{m}}(X) & = 
    \begin{cases}
      \iota(n)		&\text{if } n = m \text{ and } \sigma \in X \\
      0			&\text{otherwise}
    \end{cases}
  \\
  ([\alpha_{n}]_{\sigma})_{\beta_{m}}(X \times Y) &= 0 
 \\[1ex]
   ([\beta_{n}]_{\sigma})_{\alpha_{m}}(X) &= 0
  \\
  ([\beta_{n}]_{\sigma}^{\tau})_{\beta_{m}}(X \times Y) & = 
    \begin{cases}
      \iota(n)		&\text{if $n = m$ and $\sigma \in X$, $\tau \in Y$} \\
      0			&\text{otherwise}
    \end{cases}
\end{align*}

\item[Parallel:] For $\mu, \mu' \in \Delta^{\A_\mem}(\B, \Pi)$, let
  $\mu \mpar{\sigma}{\tau} \mu' \in \Delta^{\A_\mem}(\B, \Pi)$ be
  defined, for $n \in \Lambda$, $\alpha \in \set{\phago, \exo,
    \coexo}$, $\beta \in \set{\cophago, \pino}$, and $X,Y \in
  \B_{\mem}/_{\equiv}$, as follows (where for all $X,\tau$: $X_{|\tau}
  \triangleq \bigcup \set{[\sigma]_{\equiv} \mid \sigma|\tau \in X}$):
\begin{align*}
    (\mu \mpar{\sigma}{\tau} \mu')_{\alpha_{n}}(X) & =
       \mu_{\alpha_{n}}(X_{| \tau}) + \mu'_{\alpha_{n}}(X_{| \sigma})
    \\[1ex]
    (\mu \mpar{\sigma}{\tau} \mu')_{\beta_{n}}(X \times Y) &=
       (\mu)_{\beta_n}(X_{| \tau} \times Y) + (\mu')_{\beta_n}(X_{| \sigma} \times Y)
  \end{align*}

\item[Void:] Let $\omega_{\sys} \in  \Delta^{\A_\sys}(\B, \Pi)$ 
be defined by $(\omega_{\sys})_a(\mset{M}) = 0$ for any $a \in \A_{\sys}$, 
such that $ar(a) = t \to \tuple{t_{1}, \dots, t_{n}}$, and 
$\mset{M} \in \bigotimes_{i=1}^{n} \Pi_{t_{i}}$.

\item[Nesting:] For 
$\nu \in \Delta^{\A_\mem}(\B, \Pi)$ and 
$\mu \in \Delta^{\A_\sys}(\B, \Pi)$, let
$\mu \mAt{\sigma}{P} \nu \in \Delta^{\A_\sys}(\B, \Pi)$ be
defined, for $X, Y \in \B_{\mem}/_{\equiv}$ and $Z, W \in
\B_{\sys}/_{\equiv}$, as follows:
\begin{align*}
  (\mu \mAt{\sigma}{P} \nu)_{\ph_{n}}(Z \times W) &= 
  \begin{cases}
    \nu_{\phago_{n}}([\sigma]_{\equiv})	
    	&\text{if } \cell{\sigma}{P} \in Z \text{ and } \void \in W \\
    0	&\text{otherwise}
  \end{cases}
  \\
  (\mu \mAt{\sigma}{P} \nu)_{\coph_{n}}(X \times Y \times Z \times W) &=
  \begin{cases}
  \nu_{\cophago_{n}}(X \times Y) 
  	&\text{if } P\in Z \text{ and } \void \in W \\
  0	&\text{otherwise}
  \end{cases}
\\
(\mu \mAt{\sigma}{P} \nu)_{\ex_{n}}(X \times Z \times W) &=
  \begin{cases}
    \nu_{\exo_{n}}(X)	&\text{if } P \in Z \text{ and } \void \in W \\
    0	&\text{otherwise}
  \end{cases}
\end{align*}
\begin{align*}
  (\mu \mAt{\sigma}{P} \nu)_{id}(X) &= 
    \mu_{id}(X_{ \cell{\sigma}{} }) + 
    \sum^{n \in \Lambda}_{
    \cell{X'}{ \cell{X''}{[\void]_{\equiv}} \comp [P]_{\equiv} } = X}
    \nu_{\pino_{n}}(X' \times X'') + {}
     \\
   &\hspace{1em}
    \sum^{n \in \Lambda}_{\cell{X' | X'' }{Y''} \comp Y' = X}
    \hspace{-1em}
    \frac{
    \mu_{\ex_{n}}(X' \times Y' \times Y'') \cdot
    \nu_{\coexo_{n}}(X'')}{\iota(n)}
\end{align*}

\item[Composition:]  For 
$\mu, \mu' \in \Delta^{\A_\sys}(\B, \Pi)$, let
$\mu \mtimes{P}{Q} \mu' \in \Delta^{\A_\sys}(\B, \Pi)$ be
defined, for $X, Y \in \B_{\mem}/_{\equiv}$ and $Z, W \in
\B_{\sys}/_{\equiv}$, as follows (where for all $W,Q$, $W_{\comp Q} \triangleq
  \bigcup \set{[P]_{\equiv} \mid P \comp Q \in W}$):
\begin{align*}
  (\mu \mtimes{P}{Q} \mu')_{\ph_{n}}(Z \times W) &= 
    \mu_{\ph_{n}}(Z \times W_{\comp Q}) + 
    \mu'_{\ph_{n}}(Z \times W_{\comp P})
  \\[1ex]
  (\mu \mtimes{P}{Q} \mu')_{\coph_{n}}(X \times Y \times Z \times W) &= 
    \mu_{\coph_{n}}(X \times  Y \times Z \times W_{\comp Q}) + 
    \mu'_{\coph_{n}}(X \times Y \times  Z \times W_{\comp P})
  \\[1ex]
  (\mu \mtimes{P}{Q} \mu')_{\ex_{n}}(X \times Z \times W) & = 
    \mu_{\ex_{n}}(X \times Z \times W_{\comp Q}) + 
    \mu'_{\ex_{n}}(X \times Z \times W_{\comp P})
  \\[1ex]
  (\mu \mtimes{P}{Q} \mu')_{id}(X) & = 
    \mu_{id}(X_{\comp Q}) + \mu'_{id}(X_{\comp P}) + {}
    \\
  &
    \sum^{n \in \Lambda}_{
      \cell{X_{1}}{ \cell{X_{2}}{ Y_{1} } \comp Z_{1} } \comp Y_{2} \comp Z_{2} = X}
    \hspace{-2em}
    \frac{
      \mu_{\ph_{n}}( Y_{1} \times Y_{2})
      \cdot
      \mu'_{\coph_{n}}( X_{1} \times X_{2} \times Z_{1} \times Z_{2} )
    }{\iota(n)} + {}
    \\
  &
    \sum^{n \in \Lambda}_{
      \cell{X_{1}}{ \cell{X_{2}}{ Z_{1} } \comp Y_{1} } \comp Z_{2} \comp Y_{2} = X}
    \hspace{-2em}
    \frac{
      \mu_{\coph_{n}}( X_{1} \times X_{2} \times Y_{1} \times Y_{2} )
      \cdot
      \mu'_{\ph_{n}}( Z_{1} \times Z_{2} )
    }{\iota(n)}
\end{align*}
\end{description}

These operators have nice algebraic properties (e.g., $\mu'
\mpar{\sigma}{\tau} \mu'' = \mu'' \mpar{\tau}{\sigma} \mu'$, $(\mu'
\mpar{\sigma}{\tau} \mu'') \mpar{\sigma|\tau}{\rho} \mu''' = \mu'
\mpar{\sigma}{\tau|\rho} (\mu'' \mpar{\tau}{\rho} \mu''')$, \dots),
and respect the structural congruence (e.g., if $P \equiv P'$ and $Q
\equiv Q'$ then $\mu' \mtimes{P}{Q} \mu'' = \mu' \mtimes{P'}{Q'}
\mu''$).  We refer to \cite{bm:tcs12} for further details about these
properties, very useful in calculations.

The next lemmata state that the stochastic transition relation $\to$
(and hence the operational semantics) is well-defined and consistent,
that is, for each process we have exactly one family of measures of
its continuations, and this family respects structural congruence.
\begin{lemma}[Uniqueness] \label{lm:uniqueness} 
 For $a \in \A$ such that $ar(a) = t \to \tuple{t_{1}, \dots, t_{n}}$, and  
 $M \in \B_{t}$, there exists a unique 
 $\mu \in \Delta^{\A_t}(\B, \Pi)$ such that $M \to \mu$.
\end{lemma}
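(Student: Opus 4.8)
The plan is to argue by structural induction on $M$, carried out simultaneously for the two sorts $t \in \set{\mem, \sys}$ (the quantification over $a \in \A$ in the statement merely serves to let $t$ range over $\mem$ and $\sys$). The key observation is that the rules of Table~\ref{tbl:stochSOS} are \emph{syntax-directed}: for each possible shape of $M$ there is exactly one rule whose conclusion has that shape --- (zero) for $\zero$, (pref) for $\alpha.\sigma$, (pref-arg) for $\beta(\tau).\sigma$, (par) for $\sigma | \tau$, (void) for $\void$, (loc) for $\cell{\sigma}{P}$, (comp) for $P \comp Q$ --- and these shapes exhaust $\B$. Hence $\to$ is deterministic and total by construction, \emph{provided} the metaoperations appearing in the conclusions are themselves well-defined.

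In the base cases $M = \zero$ and $M = \void$ the matching rule is an axiom naming the constant family $\omega_\mem$, respectively $\omega_\sys$, so $\mu$ exists and is unique. In each inductive case the premises of the matching rule have the form (immediate subterm of $M$) $\to$ (metavariable); by the induction hypothesis each such metavariable denotes a unique family of measures, while the subterms themselves occur in the conclusion only as fixed annotations (pinned down by unique readability of the abstract syntax). The conclusion then fixes $\mu$ as one of the Prefix constants $[\alpha_n]_\sigma$, $[\beta_n]_\sigma^\tau$, or as the image of the already-determined data under one of the operators $\mpar{\sigma}{\tau}$, $\mAt{\sigma}{P}$, $\mtimes{P}{Q}$. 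Thus $\mu$ is uniquely determined, and existence holds because no shape is left without a rule.

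The real content is therefore the verification that Null, Prefix and Parallel are well-defined maps into $\Delta^{\A_\mem}(\B,\Pi)$, and that Void, Nesting and Composition are well-defined maps into $\Delta^{\A_\sys}(\B,\Pi)$. Here I would invoke the remark following Definition~\ref{def:termspace}: $\B/_\equiv$ is a denumerable \emph{partition} generating $\Pi$, and the rectangles $\B_{\tuple{t_1,\dots,t_n}}/_\equiv$ form a denumerable partition generating $\bigotimes_{i} \Pi_{t_i}$. Consequently a measure on such a space is freely and completely specified by assigning a non-negative value to each base element and extending by countable additivity: the extension exists, is unique, and is subject to no consistency condition because the base elements are pairwise disjoint. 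For Null and Void this gives the zero measure; for Prefix the constants are non-negative multiples ($\iota(n)$) of a point mass concentrated on a single base element, hence measures --- and they depend only on the $\equiv$-classes of their annotations, a fact reused later for the congruence lemma. For Parallel, Nesting and Composition I would check that (i) the auxiliary sets $X_{|\tau}$, $X_{\cell{\sigma}{}}$, $W_{\comp Q}$ and their analogues are countable unions of base elements, hence lie in $\Pi$; (ii) each decomposition sum --- notably $\sum^{n\in\Lambda}_{\cell{X'}{\cdots}=X}$ in the $id$-component of $\mAt{\sigma}{P}$, and the two sums in the $id$-component of $\mtimes{P}{Q}$ --- ranges over a countable index set with non-negative summands, hence denotes a well-defined element of $[0,+\infty]$; and therefore (iii) every component is a non-negative base assignment and so extends uniquely to a measure.

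The main obstacle is not the induction skeleton --- that is the routine ``syntax-directed, hence deterministic and total'' argument --- but point (ii): making rigorous that the decomposition sums in the Nesting and Composition operators are genuinely well-defined and that the resulting set functions are countably additive (and, should $\Delta(\cdot,\cdot)$ be restricted to finite or $\sigma$-finite measures, that the relevant bound holds, which is the case since for each base element only finitely many decompositions of any prescribed shape contribute a non-zero term). As in the companion development, I would lean on \cite{bm:tcs12} for the bookkeeping behind these checks, since both the operators and their well-definedness are inherited from there.
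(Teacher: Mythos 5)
Your proof is correct, and it follows the natural route: the paper itself states this lemma without proof (it is recalled from \cite{bm:tcs12}), and the intended argument is precisely the structural induction you give, resting on the syntax-directedness of the rules in Table~\ref{tbl:stochSOS} together with the observation that $\B/_{\equiv}$ (and its product analogues) is a denumerable partition generating $\Pi$, so that each operator is well-defined by freely assigning non-negative values on base elements. Your identification of the decomposition sums in the $id$-components of $\mAt{\sigma}{P}$ and $\mtimes{P}{Q}$ as the only point requiring real care is also where the substance of the deferred bookkeeping lies.
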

\begin{lemma} \label{lm:stransitionisuptoequiv}
 If $M \equiv N$ and $M \to \mu$, then $N \to \mu$.
\end{lemma}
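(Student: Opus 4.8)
The plan is to prove both Lemma~\ref{lm:uniqueness} and Lemma~\ref{lm:stransitionisuptoequiv} simultaneously by induction on the structure of the term $M$, since they are mutually supporting: uniqueness lets us speak of \emph{the} measure $\mu$ associated to $M$, while congruence-invariance is needed in the inductive step for $M \equiv N$ whenever the syntactic shapes of $M$ and $N$ differ. Actually, for the final statement (Lemma~\ref{lm:stransitionisuptoequiv}) in isolation, I would proceed by induction on the derivation of $M \equiv N$, using that $\equiv$ is the smallest equivalence relation closed under the listed axioms and inference rules. The reflexive, symmetric and transitive closure cases are immediate (for symmetry one also uses Lemma~\ref{lm:uniqueness}: if $M\to\mu$ and $N\to\nu$ with $M\equiv N$, then having shown $N\to\mu$ from the ``forward'' direction, uniqueness forces $\nu=\mu$). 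So the real work is in the base axioms and the congruence (context) rules.

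First I would handle the congruence rules, e.g.\ $\cell{\sigma}{P}\equiv\cell{\tau}{Q}$ from $P\equiv Q$ and $\sigma\equiv\tau$. Here I apply Lemma~\ref{lm:uniqueness} to get the unique $\mu$ with $P\to_\sys\mu$ and the unique $\nu$ with $\sigma\to_\mem\nu$; by the induction hypothesis also $Q\to_\sys\mu$ and $\tau\to_\mem\nu$; then rule (loc) gives both $\cell{\sigma}{P}\to_\sys\mu\mAt{\sigma}{P}\nu$ and $\cell{\tau}{Q}\to_\sys\mu\mAt{\tau}{Q}\nu$, so it remains to check $\mu\mAt{\sigma}{P}\nu=\mu\mAt{\tau}{Q}\nu$ as families of measures. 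This is exactly one of the ``respects structural congruence'' facts for the operators cited in the text (with reference to \cite{bm:tcs12}), and reduces to unwinding the definition of $\mAt{\cdot}{\cdot}$ on base rectangles $X\times W$ etc.: replacing $\sigma$ by $\tau\equiv\sigma$ and $P$ by $Q\equiv P$ does not change conditions like ``$\cell{\sigma}{P}\in Z$'' or the index sets of the summations, since those are stated up to $\equiv$. The analogous reasoning covers the rules for $P\comp R$, $\sigma|\rho$, $\alpha.\sigma$, and $\beta(\rho).\sigma$, each time invoking the matching algebraic/congruence property of $\mtimes{\cdot}{\cdot}$, $\mpar{\cdot}{\cdot}$, or the constants $[\alpha]_\sigma$, $[\beta]_\sigma^\tau$.

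Next I would dispatch the base axioms. The ``unit'' axioms like $P\comp\void\equiv P$ require showing $\mu\mtimes{P}{\void}\omega_\sys=\mu$ where $\omega_\sys$ is the null family (so every summation term involving $\mu'_{\coph_n}$ etc.\ vanishes, $W_{\comp\void}=W$, and the remaining component reduces to $\mu$); similarly $\sigma|\zero\equiv\sigma$ uses $\mu\mpar{\sigma}{\zero}\omega_\mem=\mu$ and $X_{|\zero}=X$. The commutativity and associativity axioms for $|$ and $\comp$ correspond precisely to the listed algebraic identities $\mu'\mpar{\sigma}{\tau}\mu''=\mu''\mpar{\tau}{\sigma}\mu'$ and the associativity law (and their $\comp$-counterparts), which I would cite or verify on base sets. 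The one genuinely special axiom is $\cell{\zero}{\void}\equiv\void$: here I must check that the unique $\mu$ with $\cell{\zero}{\void}\to_\sys\mu$, namely $\omega_\sys\mAt{\zero}{\void}\omega_\mem$ obtained from (zero), (void), (loc), equals $\omega_\sys$ — which holds because every clause in the definition of $\mAt{\cdot}{\cdot}$ either calls $\nu=\omega_\mem$ (giving $0$) or, in the $id$ component, calls $\mu_{id}=(\omega_\sys)_{id}$ on a smaller set and sums products containing a $\omega_\mem$-factor, hence is identically $0$.

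I expect the main obstacle to be the $id$-component identities hidden inside the operator-congruence facts: the definitions of $(\mu\mAt{\sigma}{P}\nu)_{id}$ and $(\mu\mtimes{P}{Q}\mu')_{id}$ involve infinite indexed sums over decompositions $\cell{X'|X''}{Y''}\comp Y'=X$ (and similar), and one must argue that these index sets, the multiplicities $\iota(n)$, and the summed values are invariant under replacing a sub-term by a $\equiv$-congruent one, or under re-association/commutation of $\comp$. Making this rigorous means observing that each such sum ranges over $\equiv$-classes of decompositions and that the structural-congruence axioms for $\cell{\cdot}{\cdot}$, $\comp$, $|$ on terms induce a bijection between the corresponding index sets preserving $\iota$ and the measure values; this is the bookkeeping that \cite{bm:tcs12} carries out in detail, so in this paper I would state it as the cited properties and only sketch the reduction. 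Everything else is a routine unfolding of definitions on the base $\B/_\equiv$ together with the observation that the SOS rules are syntax-directed, so that for each syntactic form of $M$ there is exactly one applicable rule, which both drives the induction and gives uniqueness in Lemma~\ref{lm:uniqueness}.
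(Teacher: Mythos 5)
Your proof is correct and is essentially the intended argument: the paper itself omits the proof (deferring to \cite{bm:tcs12}), but it states precisely the ingredients you use — the syntax-directedness of the rules in Table~\ref{tbl:stochSOS} giving Lemma~\ref{lm:uniqueness}, and the algebraic/congruence-respecting identities of $\mpar{\cdot}{\cdot}$, $\mtimes{\cdot}{\cdot}$ and $\mAt{\cdot}{\cdot}$ — and your induction on the derivation of $M\equiv N$, with the unit, commutativity/associativity, and $\cell{\zero}{\void}\equiv\void$ axioms discharged by the corresponding measure identities, is exactly how those ingredients are meant to be assembled. Your identification of the $id$-component sums as the only nontrivial bookkeeping is also accurate, and there is no circularity since Lemma~\ref{lm:uniqueness} is provable by structural induction independently of this lemma.
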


This operational semantics can be used to define the 
`` traditional'' pointwise semantics:
\begin{equation*}
  M \trs{a, r} \tuple{M_{1}, \dots, M_{n}}
  \stackrel{\triangle}{\iff}
  M \to \mu \; \text{ and } \; \mu_a([\tuple{M_{1}, \dots, M_{n}}]_{\equiv}) = r
\end{equation*}
and it is conservative with respect to the non-stochastic reduction semantics.
\begin{proposition} \label{prop:conservative}
  For all systems $P,Q\in\B_\mem$, 
   if $P \to \mu$ and $\mu_{id}([Q])>0$ then $P \react Q$.
\end{proposition}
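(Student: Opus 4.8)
The plan is to prove the statement by induction on the structure of the system $P$ (the statement is about systems, since $id\in\A_\sys$). By Lemma~\ref{lm:uniqueness} the family $\mu$ with $P\to\mu$ is unique, and the rules of Table~\ref{tbl:stochSOS} are syntax-directed, so the shape of $\mu$ is fixed by the topmost constructor of $P$. I therefore split into the cases $P=\void$, $P=\cell{\sigma}{P_1}$, and $P=P_1\comp P_2$; in each case I read off from the defining clause of the $id$-component which summands can make $\mu_{id}([Q]_\equiv)$ positive, and for each such summand I exhibit a reduction $P\react Q$ built from the rules of Table~\ref{tbl:redrules}. Throughout I use that $\react$ is closed under $\equiv$ (rule (red-equiv)), that $\to$ is closed under $\equiv$ (Lemma~\ref{lm:stransitionisuptoequiv}), and that the measure operators respect $\equiv$, so that any subterm may be replaced by an $\equiv$-equivalent one without changing $\mu$.

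Before the induction I would establish two ``reflection'' lemmas, each by a routine structural induction on the operators defining the semantics. For a membrane $\sigma$ with $\sigma\to_\mem\nu$: if $\nu_{\phago_n}(X)>0$ then $\sigma\equiv\phago_n.\sigma_1\mid\sigma_0$ with $\sigma_1\mid\sigma_0\in X$, and likewise for $\exo_n$ and $\coexo_n$; if $\nu_{\pino_n}(X\times Y)>0$ then $\sigma\equiv\pino_n(\rho).\sigma_1\mid\sigma_0$ with $\sigma_1\mid\sigma_0\in X$ and $\rho\in Y$, and likewise for $\cophago_n$. For a system $P$ with $P\to_\sys\mu$: if $\mu_{\ex_n}(X\times Z\times W)>0$ then $P\equiv\cell{\exo_n.\sigma\mid\sigma_0}{P'}\comp R$ with $\sigma\mid\sigma_0\in X$, $P'\in Z$, $R\in W$; if $\mu_{\ph_n}(\cdot)>0$ then $P$ exposes a $\phago_n$-prefixed cell (with the cell-after-consumption and the leftover in the corresponding arguments); and analogously $\mu_{\coph_n}(\cdot)>0$ exposes a $\cophago_n$-prefixed cell. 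These are just the converses of the computations one performs when defining the semantics via the Nesting and Composition operators, and they isolate precisely the syntactic redex that the reduction rules require.

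The case $P=\void$ is vacuous, since $\mu=\omega_\sys$ is identically $0$. If $P=\cell{\sigma}{P_1}$, then $\mu=\mu_1\mAt{\sigma}{P_1}\nu$ with $P_1\to\mu_1$ and $\sigma\to\nu$, and the Nesting clause writes $\mu_{id}([Q]_\equiv)$ as a sum of three terms. If the recursive term $(\mu_1)_{id}([Q]_{\cell{\sigma}{}})$ is positive it produces $P'$ with $\cell{\sigma}{P'}\equiv Q$ and $(\mu_1)_{id}([P']_\equiv)>0$; the induction hypothesis gives $P_1\react P'$, hence $P\react\cell{\sigma}{P'}\equiv Q$ by (red-loc) and (red-equiv). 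If the pino-summand is positive, for some $n$ and some decomposition $\cell{X'}{\cell{X''}{[\void]_\equiv}\comp[P_1]_\equiv}=[Q]_\equiv$, then $\nu_{\pino_n}(X'\times X'')>0$, so by the membrane reflection lemma $\sigma\equiv\pino_n(\rho).\sigma_1\mid\sigma_0$ with $\sigma_1\mid\sigma_0\in X'$ and $\rho\in X''$; then (red-pino) yields a reduct that lies in $[Q]_\equiv$ by the chosen decomposition, and (red-equiv) closes the case. If the exo-summand is positive, then $(\mu_1)_{\ex_n}(X'\times Y'\times Y'')>0$ and $\nu_{\coexo_n}(X'')>0$ for a decomposition of $[Q]_\equiv$; the two reflection lemmas give $P_1\equiv\cell{\exo_n.\sigma'\mid\sigma'_0}{P'}\comp R$ and $\sigma\equiv\coexo_n.\tau\mid\tau_0$ with the residuals landing in the indicated classes, so $P\equiv\cell{\coexo_n.\tau\mid\tau_0}{\cell{\exo_n.\sigma'\mid\sigma'_0}{P'}\comp R}$ is an instance of the left-hand side of (red-exo), its reduct belongs to $[Q]_\equiv$, and (red-equiv) concludes. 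Finally, if $P=P_1\comp P_2$, then $\mu=\mu_1\mtimes{P_1}{P_2}\mu_2$ and $\mu_{id}([Q]_\equiv)$ is the sum of the two recursive terms $(\mu_1)_{id}([Q]_{\comp P_2})$ and $(\mu_2)_{id}([Q]_{\comp P_1})$---handled by the induction hypothesis together with (red-comp) and (red-equiv)---and two phago/cophago product sums; a positive summand of the first of these yields a decomposition $\cell{X_1}{\cell{X_2}{Y_1}\comp Z_1}\comp Y_2\comp Z_2=[Q]_\equiv$ with $(\mu_1)_{\ph_n}(Y_1\times Y_2)>0$ and $(\mu_2)_{\coph_n}(X_1\times X_2\times Z_1\times Z_2)>0$, so by the system reflection lemma $P_1\equiv\cell{\phago_n.\sigma\mid\sigma_0}{P'}\comp R_1$ and $P_2\equiv\cell{\cophago_n(\rho).\tau\mid\tau_0}{Q'}\comp R_2$; then (red-phago) fires on the two distinguished cells, (red-comp) puts back $R_1\comp R_2$, the resulting system lies in $[Q]_\equiv$, and (red-equiv) concludes. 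The second product sum is symmetric.

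The step I expect to be the main obstacle is the bookkeeping that links the set-indexed summations in the Nesting and Composition clauses to genuine syntactic decompositions: one must check that a decomposition of the equivalence class $[Q]_\equiv$ as, say, $\cell{X_1}{\cell{X_2}{Y_1}\comp Z_1}\comp Y_2\comp Z_2$ is realised by concrete terms $M_1\in X_1,\dots$ with $\cell{M_1}{\cell{M_2}{N_1}\comp L_1}\comp N_2\comp L_2\equiv Q$, and that the membrane and system witnesses delivered by the reflection lemmas can be chosen compatibly with that decomposition. This is exactly where the remarks made after Definition~\ref{def:termspace}---that $\equiv$-classes of tuples are rectangles, so the product $\sigma$-algebras and the decompositions involved are well behaved---together with the congruence properties of the measure operators, do the real work. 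Once the decompositions are aligned, each case reduces to a single application of the matching rule of Table~\ref{tbl:redrules} followed by one use of (red-equiv).
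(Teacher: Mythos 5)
Your argument is correct, but note that the paper itself gives no proof of this proposition: it is stated as a property inherited from the stochastic semantics of \cite{bm:tcs12} (indeed the statement even carries a typo, $\B_\mem$ for $\B_\sys$, which you rightly read past). So there is no in-paper proof to compare against; judged on its own, your route is the natural one and I see no gap. The structural induction on $P$ is well-founded, the case split follows the syntax-directed rules by Lemma~\ref{lm:uniqueness}, and the ``reflection'' (inversion) lemmas you isolate --- e.g.\ that $\nu_{\phago_n}(X)>0$ forces $\sigma\equiv\phago_n.\sigma_1\mid\sigma_0$ with $\sigma_1\mid\sigma_0\in X$, and the analogous statements for $\ex_n$, $\ph_n$, $\coph_n$ at the system level --- are exactly the auxiliary machinery needed; each is a routine induction through the Prefix, Parallel, Nesting and Composition clauses, using that sets like $X_{\mid\tau}$ and $X_{\comp Q}$ are countable unions of $\equiv$-classes so that positivity of a measure on them localises to a single class by countable additivity. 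You also correctly identify the only delicate point, namely realising a decomposition such as $\cell{X_1}{\cell{X_2}{Y_1}\comp Z_1}\comp Y_2\comp Z_2=[Q]_\equiv$ by concrete representatives compatible with the witnesses from the inversion lemmas; this is handled, as you say, by the rectangle structure of product classes and the fact that the measure operators respect $\equiv$. The one thing I would add is that in the recursive summand of the Nesting clause you should say explicitly that $(\mu_1)_{id}(X_{\cell{\sigma}{}})>0$ yields a \emph{single} class $[P']_\equiv$ with $(\mu_1)_{id}([P']_\equiv)>0$ and $\cell{\sigma}{P'}\equiv Q$ before invoking the induction hypothesis --- but this is the same countable-additivity observation you already use elsewhere.
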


\section{The COW Generic Stochastic Abstract Machine}
\label{sec:gsam}
In this section we present a variant of the generic stochastic
abstract machine (GSAM), oriented to systems with nested
compartments.

The GSAM has been introduced in \cite{pylp:cmsb10,lpp:tcs12} for
simulating a broad range of process calculi with an arbitrary
reaction-based simulation algorithm.  Although it does not have a
native notion of ``compartment'', nested systems can represented by
``flattening'' all compartments and species into a single multiset,
where each species is tagged with names representing their position in
the hierarchy, as shown in \cite{lpp:tcs12}.  The idea is to represent
each compartment as a different species, keeping track of their
position in the hierarchy by means of \emph{(node) names}. These names
are ranged over by $x,y,z,\dots$, and are different from names in
actions.  As an example, a system $\cell{\sigma}{\cell{\tau}{}}$ is
represented as the multiset $\{\cell{\sigma}{^x_y}\mapsto
1,\cell{\tau}{^y_z}\mapsto 1\}$, which means ``there is one cell with
membrane $\sigma$ located in the compartment $x$ and whose compartment
is $y$, and one cell with membrane $\tau$ positioned in $y$ and whose
compartment is $z$''.  Reactions can happen only if the names tagging
the involved species match according to the required nesting
structure.

Unfortunately, this approach does not scale well, as the population of
species grows.  Let us consider a system composed of $n$ copies of the
same cell, e.g., $n\cdot(\cell{\sigma}{})$ (where $n$ can be easily in
the order of $10^3$--$10^4$).  In the original GSAM idea, this should
be represented in the machine as a single species with multiplicity
$n$, and each possible reaction is represented once but with
propensity given by the law of mass action taking into account the
species' multiplicity $n$.  Instead, the ``flat encoding'' above
yields $n$ different species $\cell{\sigma}{^x_{y_1}}, \dots,
\cell{\sigma}{^x_{y_n}}$, each with multiplicity 1; the set of
reactions explodes correspondingly.

For circumventing this problem we introduce a variant of the GSAM
with a \emph{copy-on-write} strategy---hence it is called COWGSAM.
The idea is to keep a single copy of each species, with its
multiplicity; the same applies to reactions.  When a reaction has
to be applied, the compartments involved are ``unfolded'', i.e., fresh
copies of the compartments are generated and the reaction set is
modified accordingly; then, the reaction can be applied.  In this way,
the hierarchical structure is unfolded only if and when needed.

In order to implement this idea, we have to modify the notion of
machine term, reaction and reaction rule.  The COWGSAM (with the Next
Reaction method) is shown in Figure~\ref{fig:gsam}.

\begin{figure}[t]
\begin{align*}
T ::= & E\vdash (t,S,R)   \tag{Machine term}\\
E ::= & x_1, \ldots, x_n \tag{Environments}\\
S ::= &\{I_1 \mapsto i_1, \ldots, I_N \mapsto i_N\}		\tag{Populations}\\
R ::= & \{O_1 \mapsto A_1, \ldots, O_N \mapsto A_N\}         \tag{Reactions}\\
O ::= & (S_1,r,f,S_2)  \tag{Reaction}
\end{align*}
\vspace{-1ex}
\begin{equation}
\rl{E \vdash (t, S, R) \trs{a, (S_1,r,f,S_2)} norm(E' \cup fn(S_2) \vdash (f(S_2 \oplus ((t',S',R') \ominus
  S_1))))}%
  {((S_1,r,f,S_2), a, t') = next(t,S,R) \quad E'\vdash(t,S',R') = cow(E \vdash (t,S,R),S_1)}
\tag{Reaction rule}\label{rl:reac}
\end{equation}
\vspace{-1ex}
\begin{align*}
next(t,S,R) \triangleq &\ (O,a,t') \quad \text{if } R(O) = (t',a) \text{ and } t'
= \min\{t \mid R(O) = (t,a)\}
\\[1ex]
cow(E \vdash (t,S,R),\emptyset) \triangleq &\ E \vdash (t,S,R)
\\
cow(E \vdash (t,S,R),\{\cell{\rho}{^y_z}\mapsto j\}\cup S_1 ) \triangleq &\
cow(E'\cup fn(S'') \vdash (t, S'', R'' \cup init(L'\cup L'',(t,S',R)), S_1) \quad\text{where} \\
& \text{if } S(\cell{\sigma}{^x_y}) > 1 \text{ for some } \sigma, x: \\
& \quad \text{let } i = S(\cell{\sigma}{^x_y})  \text{ and } y' \notin E\\
& \quad (S',R') = dup(E\vdash (t,S\{\cell{\sigma}{^x_y} \mapsto 1,\cell{\sigma}{^x_{y'}} \mapsto i - 1\},R),y,y')\\
& \quad L' = reactions(\cell{\sigma}{^x_{y'}}\mapsto i-1, S)\\
& \text{otherwise let } (S',R') = (S,R),  L' = \emptyset \text{ in}\\
& \text{let } n = S(\cell{\rho}{^y_z}),  E'=E\cup fn(S'), \text{ and } z' \notin E'\\
& \quad (S'',R'') = dup(E' \vdash (t, S'\{\cell{\rho}{^y_z}\mapsto j,\cell{\rho}{^y_{z'}} \mapsto n{-}j\},R'),z,z')\\
& \quad L'' = reactions(\cell{\rho}{^y_{z'}}\mapsto n-j, S')
\\[1ex]
dup(E \vdash (t,S,R),y,y') \triangleq &\ (S' \cup S'', R\cup R'' \cup init(L,(t,S',R)))\\
&\text{where } S' = S \cup \{\cell{\rho}{^{y'} _{w'}} \mapsto i \mid S(\cell{\sigma}{^z _y}) > 0, S(\cell{\rho}{^y_w}) = i\}, w' \notin E,\\
& \phantom{where } L = reactions(\cell{\rho}{^{y'}_{w'}}\mapsto i, S'),  \\
& \phantom{where } (S'',R'') = dup(E \cup \{w'\} \vdash (t,S,R),w,w') 
\\[-5ex]
\end{align*}
\caption{The COW Generic Stochastic Abstract Machine, with the Next
  Reaction method.}\label{fig:gsam}
\vspace{-1ex}
\end{figure}

\begin{figure}[t]\ContinuedFloat
\begin{align*}
\{I_1 \mapsto i_1, \ldots, I_N \mapsto i_N\} \oplus (t,S,R) \triangleq&\  I_1 \mapsto i_1 \oplus \ldots \oplus I_N \mapsto i_N \oplus (t,S,R)\\
I \mapsto i \oplus (t,S,R)	\triangleq &
 \begin{cases}
  (t, S', R \cup updates(I,(t,S',R))) &  \text{if } S(I) = i' \text{
    and } S' = S\{I \mapsto i' + i\} \\
  (t, S', R \cup init(L,(t,S',R))) &  \text{if }  I \notin dom(S), 
    S' = S\{I \mapsto i\} \\& \text{and } L = reactions(I\mapsto i,S)
  \end{cases} 				
\\[1ex]
(t,S,R) \ominus \{I_1 \mapsto i_1, \ldots, I_N \mapsto i_N\} \triangleq&\ (t,S,R) \ominus I_1 \mapsto i_1 \ominus \ldots \ominus I_N \mapsto i_N
\\
 (t,S,R) \ominus I \mapsto i \triangleq&\ (t, S', R \cup updates(I,(t,S',R)))\\
 & \hspace{2cm} \text{if }  S(I) = i', i' \geq i \text{ and } S' = S\{I \mapsto i'-i\}
\\[1ex]
init(L,(t,S,R)) \triangleq &\ \{O \mapsto (t',a) \mid O \in L \text{ and } a =
propensity(O,S) \text{ and } \\
&\hspace{22.5mm} O = (S_1,r,f,S_2) \text{ and } t' = t + delay(r,a)\}
\\[1ex]
updates(I, (t,S,R)) \triangleq &\ \{O \mapsto (t',a') \mid R(O) =
(t'',a)  \text{ and } O = (S_1,r,f,S_2) \text{ and } S_1(I) > 0 \text{ and}\\
&\hspace{11mm} \text{if } t'' > t  \text{ then } a' = propensity(O,S) \text{ and } t' = t + (a/a')(t''-t)\\
&\hspace{11mm}  \text{if } t'' = t \text{ then } a' = propensity(O,S) \text{ and } t' = t + delay(r,a)\}
\\[1ex]
propensity((S_1,r,f,S_2), S) \triangleq &\ r \cdot \left( \begin{array}{c} S^*(I_1) \\ j_1 \end{array} \right) \cdot \ldots \cdot \left( \begin{array}{c} S^*(I_n) \\ j_n \end{array} \right) \text{ if } S_1 = \{I_1 \mapsto j_1, \ldots, I_n \mapsto j_n\}
\\[1ex]
S^*(\cell{\sigma}{^x_y}) \triangleq &
\begin{cases}
S(\cell{\sigma}{^x_y}) &\text{ if } x = root \\
S(\cell{\sigma}{^x_y}) \cdot S^*(\cell{\rho}{^z_x}) &\text{ if } x \neq root \text{ and } S(\cell{\rho}{^z_x}) > 0
\end{cases}
\end{align*}
\vspace{-2ex}
\caption{The COW Generic Stochastic Abstract Machine (cont.).}
\vspace{-1ex}
\end{figure}

First, for generating fresh names, we have to keep track of those
already allocated. To this end we introduce \emph{environments}, which
are finite sets of names.  Then, the machine state is represented by a
\emph{machine term} $T$, i.e.~a quadruple $E\vdash (t, S, R)$ where
$E$ is an environment; $t$ is the current time; $S$ is a finite
function mapping each species $I$ to its population $S(I)$ (if not
null); and $R$ maps each reaction $O$ to its activity $A$, which is
used to compute the next reaction.  (Notice that the syntax of species
$I$ is left unspecified, as it depends on the specific process
calculus one has to implement.)  We say that a machine term $E \vdash
(t,S,R)$ is \emph{well-formed} if for all $x_i,x_j \in E: x_i=x_j
\Rightarrow i=j$, and the free names in $S,R$ appear in $E$.  In the
following, we assume that machine terms are well formed.

Each reaction is a quadruple $(S_1,r,f,S_2)$, basically representing a
reaction $S_1 \trs{r} S_2$, where
\begin{itemize}
\item $S_1$ and $S_2$ denote the \emph{reactant} population and
  \emph{product} population respectively;
\item \textit{r} denotes the rate (in $s^{-1}$) of the reaction;
\item \textit{f} is a function mapping machine terms to machine terms;
  this functions implements any global update of the machine term
  after the reaction (if needed).
\end{itemize}

The \emph{transitions} of the abstract machine are represented by a
relation $T \trs{a,O} T'$ between machine terms, indexed by the
propensity $a$ and reaction rules. This should be read as ``$T$ goes
to $T'$ with rate $a$, using the rule $O$''.  This relation is defined
by (\ref{rl:reac}) in Figure~\ref{fig:gsam}, where the function
$next(T)$ selects the next reaction, i.e.~it returns a pair $(O,t')$
where $O=(S_1,r,f,S_2)$ is the reaction to happen first among all
possible reactions in $T$, and $t'$ is the new time of the system.
Once the reaction has been selected, we have first to create the
separate (private) copies of the compartments involved, and to update
the reaction set accordingly. This is done by the functions $cow(\_)$
and $dup(\_)$, which implement a deep copy-on-write:
$cow(E\vdash(t,S,R), S_1)$ is a machine term $E'\vdash(t,S',R')$
representing the same state as $E\vdash(t,S,R)$, but in $S'$ the
species indicated in $S_1$ are unfolded; $E'$ contains all names which
have been generated in the process, and $R'$ is the new set of
reactions.  (Actually, the freshly generated copies represent the
instances which are \emph{not} involved by the reaction; this
simplifies the reaction application.) An example of the action of
$cow(\_)$ is depicted in Figure~\ref{fig:cow}.
\begin{figure}[t]
  \centering
  \includegraphics[width=0.7\textwidth]{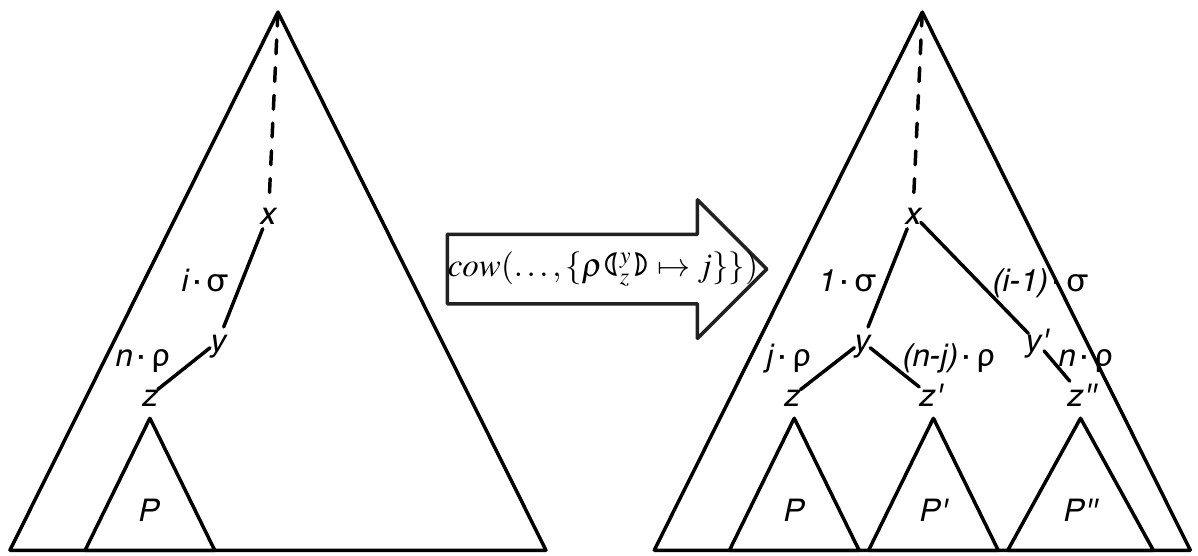}
  \caption{Example copy-on-write of a subterm.}
  \label{fig:cow}
\end{figure}

At this point, the reaction is executed, by removing the reactants
$S_1$ from the machine term (via the operation $\ominus$), adding the
products $S_2$ (via the operation $\oplus$) and updating the current
time of the machine. The function $f$ performs any global ``clean-up''
and restructuring to the machine term that may be required by the
reaction (e.g., garbage collection, elimination of names not used
anymore,\dots).  Moreover, since a reaction can rearrange the
hierarchy structure, possibly creating new compartments and deleting
others, we have to add to the environment any fresh name introduced in
the products.

Finally, the term can be ``normalized'' by collapsing equivalent
copies of the same subtree into a single copy (with the right
multiplicity), by the function $norm(\_)$. In its simplest form,
$norm(\_)$ can be the identity, i.e., no normalization is performed at
all.  Although this is correct, it can lead to unnecessary copies of
the same subtrees.  We can define $norm(\_)$ such that
\[
  norm(\{I_1\mapsto i_1, I_2\mapsto i_2\} \cup S, R) =
  norm(\{I_1\mapsto i_1 + i_2\} \cup S, R[I_2/I_1])
 \quad \text{if }  I_1\equiv I_2
\]
where the equivalence between species can be implemented by comparing
the subtrees starting from $I_1$, $I_2$, e.g., by calculating a
suitable hash value.  We leave this refinement as future development.

In order to implement the Next Reaction method, each reaction $O$ is
associated with a pair $R(O) = (a, t)$, where $a$ is the reaction
\emph{propensity} and $t$ is the time at which the reaction is
supposed to occur. The function $delay(r,a)$ computes a time interval
from a random variable with rate $r$ and
propensity $a$. 

This general structure can be instantiated with a given process
calculus, just by providing the definition for the missing parts.
Given a set $Proc$ of processes, we have to define:
\begin{enumerate}
\item the syntax of the species $I$ (which may be different from that
  of processes);
\item a function $species_{E,x}(P)$ mapping a process $P\in Proc$ to a
  species set located in $x$;
\item a function $reactions(I\mapsto i, S)$ for computing the multiset
  of reactions between a (new) species \textit{I} with multiplicity
  $i$ and a population of (existing) species $S$.
\end{enumerate}
The function $species$ is used to initialise the abstract machine at
the beginning of a simulation.  If we aim to simulate the execution of
a process $P\in Proc$, the corresponding initial state (rooted in $root$)
is
\[
\bamc{P}_{root} \triangleq fn(J) \vdash J \oplus
 (0,\emptyset,\emptyset)
 \quad \text{ where } J = species_{\emptyset,root}(P).
\]

The $reactions$ function is used to adjust the set of possible
reactions dynamically.

\section{Implementing the Stochastic Brane Calculus in
  COWGSAM}\label{sec:translation}
In this section, we show how the COW Generic Stochastic Abstract
Machine can be used to implement the Stochastic Brane Calculus,
following the protocol described in the previous section.

\subsection{Encoding of the Stochastic Brane Calculus}

\paragraph{Syntax of species}
We define the species for the brane calculus, which in turn lead us to
introduce \emph{complexes} and \emph{actions}.  Notice that (despite
the deceiving syntax) species are not systems and complexes are not
membranes; nevertheless, actions are a subset of membranes.
\begin{align*}
I ::=\ & \cell{C}{^x_y}    \tag{Species}\\
C ::=\ & A_1, \ldots, A_n    \tag{Complexes}\\
A ::=\ & \phago_n.\sigma \mid \cophago_n (\tau) .\sigma \mid \exo_n.\sigma \mid \coexo_n.\sigma \mid \pino_n(\tau).\sigma  \tag{Actions}
\end{align*}

Node names can appear in the species in $S$ and in reactions $R$. 

\paragraph{The $species$ function}
We can now provide the definition of the translation of a process
$P\in \B_\sys$ into a species set.  Basically, each compartment is
assigned a different, fresh node name; therefore, the function
$species_{E,x}(P)$ is parametric in the set $E$ of allocated node
names, and the name $x\in E$ to be used as the location of the system
$P$. The name $x$ changes as we descend the compartment hierarchy.

In order to capture correctly the multiplicity of each species, we
assume that systems are in \emph{normal form}. Basically, this form is
a shorthand for products where $n$ copies of the same system, i.e.,
$P\circ\dots\circ P$, are represented as $n\cdot P$.
\begin{multline*}
\text{Normal Systems} :: \B_\sys^n \qquad
Q \mathrel{::=}  
    n_1 \cdot \cell{\sigma_1}{Q_1} \comp \dots \comp n_k \cdot
    \cell{\sigma_k}{Q_k} \\
   \text{ where } k \geq 0 \text{ and for } i \neq j: \cell{\sigma_i}{Q_i} \not\equiv \cell{\sigma_j}{Q_j}
\end{multline*}
A system in normal form can be translated into a system just by
unfolding the products.  For $Q\in\B_\sys^n$ a system in normal form,
let $\lceil Q \rceil \in \B_\sys$ defined as follows:
\[
  \lceil \void \rceil = \void
  \qquad
  \lceil n\cdot \cell{\sigma}{Q'} \circ Q'' \rceil =
  \overbrace{\cell{\sigma}{\lceil Q' \rceil}
    \circ\dots\circ \cell{\sigma}{\lceil Q' \rceil}}^{n \text{ times}} \circ  \lceil Q'' \rceil 
\]

\begin{proposition}
  For all $P\in \B_\sys$, there exists a system in normal form
  $Q\in \B_\sys^n$ such that $\lceil Q \rceil \equiv P$.
\end{proposition}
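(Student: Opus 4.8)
The plan is to argue by structural induction on $P \in \B_\sys$, exploiting that $\equiv$ is a congruence for $\comp$ and for $\cell{\cdot}{\cdot}$, plus commutativity, associativity and the unit law $P \comp \void \equiv P$ for $\comp$. The base case $P = \void$ is immediate: take $Q$ to be the empty normal system (the instance $k = 0$ of the grammar), so $\lceil Q \rceil = \void = P$. For $P = \cell{\sigma}{P'}$, the induction hypothesis gives a normal system $Q' \in \B_\sys^n$ with $\lceil Q' \rceil \equiv P'$; I would then take $Q \triangleq 1 \cdot \cell{\sigma}{Q'}$, which is a normal system (here $k = 1$, so the distinctness side-condition is vacuous and $Q'$ is normal by hypothesis), and conclude $\lceil Q \rceil \equiv \cell{\sigma}{\lceil Q' \rceil} \equiv \cell{\sigma}{P'} = P$ by the congruence rule for $\cell{\cdot}{\cdot}$.

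The interesting case is $P = P_1 \comp P_2$. By the induction hypothesis I obtain normal systems $Q_1, Q_2 \in \B_\sys^n$ with $\lceil Q_1 \rceil \equiv P_1$ and $\lceil Q_2 \rceil \equiv P_2$. I would define a \emph{merge} operation $Q_1 \uplus Q_2$ on normal systems: first juxtapose the lists of summands of $Q_1$ and $Q_2$; then, as long as two summands $m \cdot \cell{\sigma}{R}$ and $m' \cdot \cell{\tau}{R'}$ with $\cell{\sigma}{R} \equiv \cell{\tau}{R'}$ occur, replace them by the single summand $(m + m') \cdot \cell{\sigma}{R}$. This terminates since the number of summands strictly decreases, and the result $Q \triangleq Q_1 \uplus Q_2$ is again a normal system. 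Taking this $Q$, what remains is to show $\lceil Q \rceil \equiv P$.

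For that I would establish two auxiliary facts, each by a secondary induction. First, writing $Q' \comp Q''$ for the juxtaposition of two (not necessarily merged) lists of cell-summands, one has $\lceil Q' \comp Q'' \rceil \equiv \lceil Q' \rceil \comp \lceil Q'' \rceil$; this is an easy induction on the number of summands of $Q'$, unfolding the recursive clauses defining $\lceil \cdot \rceil$ and using associativity of $\comp$ and $P \comp \void \equiv P$ for the empty case. Second, replacing summands $m \cdot \cell{\sigma}{R}$ and $m' \cdot \cell{\tau}{R'}$ with $\cell{\sigma}{R} \equiv \cell{\tau}{R'}$ by $(m + m') \cdot \cell{\sigma}{R}$ leaves $\lceil \cdot \rceil$ unchanged up to $\equiv$: after unfolding, the left-hand side contains $m$ factors $\cell{\sigma}{\lceil R \rceil}$ and $m'$ factors $\cell{\tau}{\lceil R' \rceil}$, and since $\cell{\sigma}{\lceil R \rceil} \equiv \cell{\sigma}{R} \equiv \cell{\tau}{R'} \equiv \cell{\tau}{\lceil R' \rceil}$, repeated use of the congruence rule for $\comp$ together with commutativity and associativity rewrites every $\cell{\tau}{\lceil R' \rceil}$ factor into a $\cell{\sigma}{\lceil R \rceil}$ factor, yielding exactly the unfolding of the merged summand. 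Combining the two facts with the induction hypothesis gives $\lceil Q \rceil \equiv \lceil Q_1 \rceil \comp \lceil Q_2 \rceil \equiv P_1 \comp P_2 = P$, which closes the induction.

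The main obstacle here is conceptual rather than technical, and lies in the second auxiliary fact: one must make precise that a normal system is nothing but a finitary representation of the multiset of $\equiv$-classes of the cells occurring in $P$, with $\comp$ playing the role of multiset union, so that collapsing congruent summands is sound; the rest is routine bookkeeping with the congruence rules. (If one also wanted $Q$ to be canonical, one would additionally fix a choice of $\equiv$-representative among cells, e.g.\ via a well-order on $\B$, and perform the merge against it; but for the statement as given mere existence suffices, so this refinement can be omitted.)
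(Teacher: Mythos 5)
Your proof is correct. Note that the paper states this proposition without any proof at all (it is treated as a routine consequence of the definitions), so there is no official argument to compare against; your structural induction, with the merge operation $Q_1 \uplus Q_2$ and the two auxiliary facts (compatibility of $\lceil \cdot \rceil$ with juxtaposition, and invariance of $\lceil \cdot \rceil$ under collapsing congruent summands), is a complete and sound way to discharge it. The only point worth tightening is notational: in the chain $\cell{\sigma}{\lceil R \rceil} \equiv \cell{\sigma}{R} \equiv \cell{\tau}{R'} \equiv \cell{\tau}{\lceil R' \rceil}$ you apply $\equiv$ to normal systems $R, R'$, whereas $\equiv$ is only defined on $\B$; the side condition $\cell{\sigma_i}{Q_i} \not\equiv \cell{\sigma_j}{Q_j}$ in the grammar of $\B_\sys^n$ must likewise be read through the unfolding, i.e.\ as $\cell{\sigma_i}{\lceil Q_i \rceil} \not\equiv \cell{\sigma_j}{\lceil Q_j \rceil}$. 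The paper itself is guilty of the same conflation, and once the convention is made explicit your argument goes through unchanged.
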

As a consequence, we can give the definition of $species$ on systems
in normal form, as follows:
\begin{align*}
species_{E,x}(\void) \triangleq\ &\emptyset
\\
species_{E,x}(n \cdot \cell{\sigma}{Q_1} \circ Q_2)  \triangleq\ &
    \{s(\sigma) \lcell^x_y\rcell \mapsto n\} \cup
    species_{E\uplus\{y\},y}(Q_1) \cup species_{E,x}(Q_2) \\
& \text{with }\ y \notin E \text{ and } fn(species_{E\uplus\{y\},y}(Q_1))\cap
fn(species_{E,x}(Q_2)) \subseteq \{x\}
\end{align*}
The condition in the second case ensures that two different
compartments are never given the same name---any name clash has to be
resolved by an $\alpha$-conversion.  The function $s(\_)$ converts a
membrane into a set of complexes:
\begin{alignat*}{3}
s(\zero) \triangleq\ &\emptyset
&\qquad
s(\sigma|\sigma')  \triangleq\ & s(\sigma) \cup s(\sigma')
&\qquad
s(\pi.\sigma)  \triangleq\ & \{\pi. \sigma\}
\end{alignat*}

\paragraph{The $reactions$ function}
The next step is to define the function $reactions(I\mapsto i, S)$,
for $I$ a species with multiplicity $i$ and $S$ a population.
\begin{align*}
reactions_E(I_1\mapsto i, S) \triangleq \	& unary_E(I_1) \cup binary_E(I_1, S)
\\
unary_E(I_1) \triangleq\ & \{(\{I_1 \mapsto 1\},r_n, id,
\{\cell{(s(\sigma) \cup U_1')}{^x_y} \mapsto 1,
   \cell{s(\rho)}{^y_w}\mapsto 1\}) \mid \\
   &  \hspace{2cm} I_1 = \cell{U_1}{^x_y},~ U_1 = \{\pino_n(\rho).\sigma\} \cup U_1', w \notin E \}
\\
binary_E(I_1,S) \triangleq\ & \left\{(\{I_1 \mapsto 1,I_2 \mapsto 1\},r_n,
f, \{\cell{(s(\tau) \cup U'_1 \cup s(\sigma) \cup U'_2)}{^x_y}\mapsto
1\}) \mid  I_2 \in dom(S),\right.\\
& \hspace{2cm}
(I_1 = \cell{U_1}{^x_y}, I_2 = \cell{U_2}{^y_w})
\vee
(I_2 = \cell{U_1}{^x_y}, I_1 = \cell{U_2}{^y_w}), \\
& \hspace{2cm} \left.
U_1 = \{\coexo_n.\tau\} \cup U'_1,
U_2 = \{\exo_n.\sigma\} \cup U'_2,
f = \lambda T.T[w :=x] \right\} \ \uplus
\\
& \left\{(\{I_1\mapsto 1,I_2\mapsto 1\},r_n, id,\phantom{U'_1}\right.\\
&\hspace{1cm} \{\cell{(s(\tau) \cup
  U'_1)}{^x_y} \mapsto 1, \cell{s(\rho)}{^y_w} \mapsto 1, \cell{(s(\sigma) \cup
  U'_2)}{^w_z}\mapsto 1 \}) \mid I_2\in dom(S), \\
&\hspace{2cm}
(I_1 = \cell{U_1}{^x_y}, I_2 = \cell{U_2}{^x_z})
\vee
(I_2 = \cell{U_1}{^x_y}, I_1 = \cell{U_2}{^x_z}),
\\
& \hspace{2cm}\left.
U_1 = \{\cophago_n(\rho).\tau\} \cup U'_1,
U_2 = \{\phago_n.\sigma\} \cup U'_2,~ w \notin E \right\}
\end{align*}
In the case of Brane Calculus, the unary reactions are only those
arising from pinocytosis, while binary reactions arise from exocytosis
and phagocytosis. In both cases, the multiplicity of each reactant is
1, so the multiplicity of $I_1$ is not relevant.  Exocytosis merges
two compartments; this is reflected by the fact that the
``rearranging'' function $f$ substitutes every occurrence of the name
$w$ in $T$ with $x$.  On the other hand, pinocytosis and phagocytosis
create new compartments; to represent the new structure, we choose a
fresh name $w$ representing the new intermediate nesting level, and
reconnect the various compartments accordingly.  Therefore, for any
reaction $(S_1,r,f,S_2) \in reactions_E(I_1 \mapsto i, S)$, $fn(S_2)
\setminus E$ is either $\emptyset$ (in the case of exocytosis)
or the singleton $\{ w \}$.

\subsection{Adequacy results}\label{sec:adeq}
Before proving the correctness of our implementation, we have to
define how to translate a species set back to a system of the brane
calculus.

Let $S$ be a non empty species set. A \emph{root name} of $S$, denoted
by $root(S)$, is a name $x$ such that $S(\cell{C}{^x_y}) > 0$ for some
$C,y$, and for all $z,C': \cell{C'}{^z_x}\notin dom(S)$. The next
result states that $root(\_)$ is well defined on the species sets we
encounter during a simulation.
\begin{lemma} For all $P\in\B_\sys$:
  \begin{enumerate}
  \item if $P\neq\void$: $root(\bamc{P}_x) = x$.
  \item if $\bamc{P}_x \trs{a, O} E\vdash (t,S,R)$ and
    $S\neq\emptyset$, then $root(S) = x$.
  \end{enumerate}
\end{lemma}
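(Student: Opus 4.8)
The plan is to prove both parts together, noting that part~1 is essentially the base case for the induction underlying part~2. For part~1, I would unfold the definition $\bamc{P}_x = fn(J)\vdash J\oplus(0,\emptyset,\emptyset)$ where $J = species_{\emptyset,x}(P)$. By the previous proposition I may assume $P\equiv\lceil Q\rceil$ for some normal-form system $Q$, and then argue by structural induction on $Q$ that every species $\cell{C}{^z_w}\in dom(species_{\emptyset,x}(P))$ has $z = x$ or $z$ is a name introduced strictly below $x$ in the nesting, and moreover that $x$ itself never occurs as an \emph{inner} name $w$ of any species (because the freshness side-conditions in the definition of $species$ force every newly allocated inner name $y\notin E$, and $x\in E$ at the point where $Q$'s top-level cells are translated). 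This simultaneously shows there \emph{is} a species with outer name $x$ (since $P\neq\void$ means $Q$ has at least one top-level cell $n\cdot\cell{\sigma}{Q'}$, producing $s(\sigma)\lcell^x_y\rcell\mapsto n$), and that no species has $x$ as its inner name, which is exactly the condition $root(\bamc{P}_x) = x$.

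For part~2, the strategy is induction on the length of the transition sequence $\bamc{P}_x \trs{a,O} E\vdash(t,S,R)$ (here just one step, but the real statement one wants is closure under the transition relation; I would phrase the lemma's invariant accordingly and reason about a single step from an arbitrary state satisfying the invariant). The invariant to maintain is: ``$S$ has a unique root name, and it equals $x$''. Given the rule (\ref{rl:reac}), $S$ is obtained from the starting population via $cow$, then $\ominus S_1$, $\oplus S_2$, then $f$, then $norm$. The key observations are: (i) $cow$ and $dup$ only introduce \emph{fresh} inner-level names $y',z',w'$ and duplicate species below a given node, so they never change which name is the root — every newly created species sits strictly below an existing node, and the root node is never duplicated (its species have outer name $x$, and $dup$ is invoked at inner names); (ii) removing $S_1$ cannot create a new root, and since the reactants of Brane reactions (pino, exo, phago) always involve cells nested at least one level below the root, $S_1$ never contains all species with outer name $x$, so $x$ remains a root; (iii) the products $S_2$ of every reaction in $reactions_E$, as tabulated in $unary_E$ and $binary_E$, reconnect compartments using names $x,y,w,z$ that are already present or freshly chosen \emph{inner} names — inspecting each of the three product shapes shows the outermost cell of every product keeps the outer name of the consumed reactant, so no product cell is ever placed \emph{above} the existing root; (iv) $f$ is either the identity or the renaming $[w:=x]$, which only \emph{merges} a compartment into the root level and cannot create a name with no incoming arc other than by identifying with $x$; and (v) $norm$ only collapses equivalent subtrees and renames one inner name to another, preserving roots. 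Assembling these gives that $S\neq\emptyset$ implies $root(S)$ is defined and equals $x$.

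The main obstacle I anticipate is item~(iii): verifying, by a somewhat tedious case analysis over the reaction schemas in $unary_E$ and $binary_E$, that no product ever becomes the new root — i.e.\ that the outer name of the top product cell always coincides with the outer name of one of the reactants, which is itself (inductively) not the root unless it is $x$. The exocytosis case is the delicate one, because the rearranging function $f = \lambda T.T[w:=x]$ globally renames $w$ to $x$; I need to check that after this substitution the only species with no incoming arc is still the one(s) with outer name $x$, using the fact that $w$ was an inner compartment name strictly below $x$ and that substitution $[w:=x]$ turns every arc into $w$ into an arc into $x$, hence cannot orphan the $x$-node. A secondary subtlety is making the freshness bookkeeping precise: one must track that the side-conditions $w\notin E$, $y'\notin E$, etc.\ genuinely keep all generated names distinct from $x$ (guaranteed by well-formedness of machine terms, which we assumed throughout, plus $x\in E$ at every stage since $x\in fn(S)$). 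Once these are nailed down, the rest is routine propagation of the invariant through the operators $\oplus$, $\ominus$, $cow$, $dup$, and $norm$, each of which I would dispatch with a one-line appeal to its defining equation.
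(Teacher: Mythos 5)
Your proposal is essentially the paper's own argument, only spelled out in far greater detail: the paper disposes of part~1 with ``trivial by definition'' and of part~2 by the single observation that the rules generated by $reactions(\_)$ neither change the name of the root nor introduce new ones --- which is precisely the invariant you propagate through $cow$, $dup$, $\ominus$, $\oplus$, $f$ and $norm$. One intermediate claim in your step~(ii) is false, though: reactants are \emph{not} always nested strictly below the root. The pino reactant, the outer cell of an exo pair, and both sibling cells of a phago pair may all carry the root as their outer name, and $S_1$ can exhaust the root-level species (e.g.\ for $P=\cell{\pino_n(\rho).\sigma}{\void}$ the sole reactant is the unique species located at $x$). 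This does not sink the proof, because your step~(iii) already contains the correct repair: every product multiset $S_2$ in $unary_E$ and $binary_E$ includes a cell whose outer name coincides with that of the topmost consumed reactant, so it is the \emph{combined} effect of $\ominus S_1$ followed by $\oplus S_2$ (and, for exo, $f=\lambda T.T[w:=x]$) that preserves the root --- the removal step alone does not. I would simply drop the ``$S_1$ never contains all species with outer name $x$'' clause and let (iii) carry that case.
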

\begin{proof}
  (1.) is trivial by definition.  (2.) It is enough to check that the
  reaction rules introduced by $reactions(\_)$ do not change the name
  of the root, nor introduce new ones.
\end{proof}
We can now define a function $\mabc{\_}$ which maps complexes to
membranes, and a function $\mabc{\_}_x$ mapping species sets to
systems; the latter is parametric in the name $x$ of the root of the
system.
\begin{equation*}
\mabc{A_1,\dots,A_n} \triangleq A_1|\dots|A_n
\qquad
\mabc{S}_x \triangleq \prod_{\cell{C}{^x_y} \in dom(S)}
{S(\cell{C}{^x_y})} \cdot (\cell{\mabc{C}}{\mabc{S}_y})
\end{equation*}
where the notation $n\cdot P$ is a shorthand for $P\comp\dots\comp P$,
$n$ times.  

\begin{lemma} For all $P\in\B_\sys$:
  \begin{enumerate}
  \item $\mabc{\bamc{P}_x}_x \equiv P$.
  \item if $\bamc{P}_x \trs{a, O} E\vdash (t,S,R)$ then $\mabc{S}_x$ is well defined.
  \end{enumerate}
\end{lemma}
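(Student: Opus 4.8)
The plan is to prove both parts simultaneously by structural induction on $P$, carefully threading the inductive hypothesis through the definition of $\bamc{\_}_x$ given by the $species$ function. For part (1), I would first unfold $\bamc{P}_x = fn(J)\vdash J\oplus(0,\emptyset,\emptyset)$ where $J = species_{\emptyset,x}(P)$, observing that the $\oplus$ operation only builds up the reaction set and does not alter the population component, so that the species set stored in $\bamc{P}_x$ is exactly $J$. Then the real content is the claim that $\mabc{species_{E,x}(Q)}_x \equiv \lceil Q\rceil$ for any normal-form system $Q$, after which part (1) follows from Proposition stating that every $P$ has a normal form $Q$ with $\lceil Q\rceil\equiv P$. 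This auxiliary claim I would prove by induction on the structure of the normal form $Q$: the base case $Q = \void$ is immediate since $species_{E,x}(\void) = \emptyset$ and $\mabc{\emptyset}_x = \void$; the inductive step $Q = n\cdot\cell{\sigma}{Q_1}\circ Q_2$ unfolds both $species$ and $\mabc{\_}_x$, uses the inductive hypotheses $\mabc{species_{E\uplus\{y\},y}(Q_1)}_y\equiv\lceil Q_1\rceil$ and $\mabc{species_{E,x}(Q_2)}_x\equiv\lceil Q_2\rceil$, together with the elementary fact that $\mabc{s(\sigma)} \equiv \sigma$ (another trivial membrane induction), and finally the congruence rules for $\comp$ and for $\cell{\_}{\_}$ to assemble the pieces. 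The freshness side-condition on $y$ in the definition of $species$ is what guarantees the two recursive species sets do not interfere, so the decomposition of $dom(S)$ by root name used implicitly in $\mabc{\_}_x$ is unambiguous.

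For part (2), the key observation is that "$\mabc{S}_x$ is well defined" amounts to the recursion in the definition of $\mabc{S}_x$ being well-founded and unambiguous: at each step we pick the species $\cell{C}{^x_y}\in dom(S)$ rooted at the current name $x$, recurse into name $y$, and this descent must terminate. So I would show that the species sets $S$ reachable via $\bamc{P}_x\trs{a,O}E\vdash(t,S,R)$ always have an acyclic, finitely-branching nesting structure on node names, with a unique root $x$ (the latter being exactly the content of the previous Lemma, part (2), which I may assume). Well-definedness then reduces to: (i) $root(S) = x$, so the recursion starts correctly; (ii) the "child" relation $y \mapsto z$ induced by $\cell{\_}{^y_z}\in dom(S)$ is acyclic; and (iii) $dom(S)$ is finite. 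Point (iii) is clear since $species$ produces a finite set and each reaction rule in $reactions_E$ adds only finitely many species. For (i) and (ii), I would argue that $\bamc{P}_x$ itself has these properties — acyclicity of the name hierarchy is built into $species$ because each recursive call allocates a fresh name $y\notin E$, so names strictly "increase" down the tree — and then that a single application of the reaction rule \eqref{rl:reac} preserves them: the $cow$/$dup$ functions only introduce fresh names $y',z',w'$ as duplicates of existing ones, preserving acyclicity, and the products $S_2$ of any reaction in $reactions_E$ (exocytosis, phagocytosis, pinocytosis) reconnect compartments in a way that keeps the structure a tree — exocytosis merges via the renaming $f = \lambda T.T[w:=x]$, phago/pino insert a genuinely fresh intermediate level $w$. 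One then closes under the transitive reachability by induction on the length of the transition sequence.

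The main obstacle I expect is part (2), and specifically the bookkeeping around $cow$ and $dup$: one has to check that the freshly-duplicated subtrees these functions splice into $S$ genuinely preserve the invariant that node names form a finite forest (in fact a tree, rooted at $x$) with no cycles, despite the somewhat intricate mutual recursion in their definitions in Figure~\ref{fig:gsam}. The cleanest way to handle this is to isolate an explicit invariant — "$S$ is the species encoding of some normal-form system rooted at $x$", or equivalently "$\mabc{S}_x$ is defined and $\mabc{\_}_x$, $species$ are mutually inverse up to $\equiv$ on $S$" — and to prove that this invariant is established by $\bamc{P}_x$ (using part (1)) and preserved by each of $cow$, $dup$, $\oplus$, $\ominus$, and $f$ in turn. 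Once the invariant is in place, well-definedness of $\mabc{S}_x$ is an immediate corollary, since the invariant explicitly asserts it. The routine membrane-level facts ($\mabc{s(\sigma)}\equiv\sigma$, $s$ respecting $|$) I would relegate to a one-line remark.
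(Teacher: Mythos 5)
Your plan is correct and takes essentially the same route as the paper, whose proof is only a two-line sketch: part (1) is declared ``easy'' and part (2) is reduced to checking that the reactions introduced by $reactions(\_)$ do not create loops among node names, i.e.\ that the order on names stays well founded. Your elaboration --- the auxiliary claim $\mabc{species_{E,x}(Q)}_x \equiv \lceil Q\rceil$ for normal forms, and the acyclicity invariant checked to be preserved by $cow$, $dup$, $\oplus$, $\ominus$ and $f$ --- is precisely the detail the paper leaves implicit (and you are in fact slightly more thorough, since the paper's sketch does not explicitly mention the $cow$/$dup$ bookkeeping).
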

\begin{proof}
  (1.) is easy.  (2.) It is enough to check that the reaction rules
  introduced by $reactions(\_)$ do not introduce loops (i.e., the
  order among names is well founded).
\end{proof}

We can now state and prove the main results of this section.

\begin{proposition}[Soundness]\label{prop:soundness}
  For all $P\in\B_\sys$, if $\bamc{P}_x \trs{a, O} E\vdash (t,S,R)$
  then there exists $\mu$ such that $P \trs{} \mu$ and
  $\mu_{id}([\mabc{S}_x]) = a$.
\end{proposition}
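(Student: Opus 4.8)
The plan is to prove the Soundness proposition by a careful case analysis on the reaction rule $O=(S_1,r,f,S_2)$ that triggers the transition $\bamc{P}_x \trs{a,O} E\vdash(t,S,R)$. By the definition of the transition relation \eqref{rl:reac}, this reaction must come from the initial reaction set, which is built by $reactions_E(\_)$ applied to the species of $\bamc{P}_x$; hence $O$ is one of the three shapes produced by $unary_E$ (pinocytosis) and $binary_E$ (exocytosis, phagocytosis). For each shape I would identify, inside $P$ (up to structural congruence, using Lemma~\ref{lm:stransitionisuptoequiv}), the redex matching the corresponding reduction rule of Table~\ref{tbl:redrules} — e.g.\ for the pino-reaction a subterm $\cell{\pino_n(\rho).\sigma \mid \sigma_0}{P'}$ located in the compartment named $y$, and for the phago-reaction a pair of sibling cells $\cell{\cophago_n(\rho).\tau\mid\tau_0}{Q'} \comp \cell{\phago_n.\sigma\mid\sigma_0}{P'}$. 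The key bookkeeping observation is that the $cow(\_)$/$dup(\_)$ machinery only unfolds multiplicities into freshly-named but structurally identical copies, so $\mabc{\_}_x$ is invariant under it; that is, $\mabc{S}_x \equiv \mabc{S'}_x$ whenever $E'\vdash(t,S',R')=cow(E\vdash(t,S,R),S_1)$. This lets me reduce the claim to the effect of the $\ominus S_1$, $\oplus S_2$ and $f$ operations on a single representative copy.

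The second, quantitative half is to compute $\mu_{id}([\mabc{S}_x])$, where $\mu$ is the unique family of measures with $P\to\mu$ given by Lemma~\ref{lm:uniqueness}. Here I would exploit the compositional definition of the stochastic SOS in Table~\ref{tbl:stochSOS}: writing $P$ in the form $C[\,\cdot\,]$ filled with the redex, the $id$-component of $\mu$ at $[\mabc{S}_x]$ is obtained by unwinding the operators $\mtimes{}{}$, $\mAt{}{}$, $\mpar{}{}$ down to the prefix constants $[\alpha_n]_\sigma$, $[\beta_n]_\sigma^\tau$. The relevant summands in the definitions of $\mu\mAt{\sigma}{P}\nu)_{id}$ and $(\mu\mtimes{P}{Q}\mu')_{id}$ — the $\nu_{\pino_n}$ term and the two $\frac{\mu_{\ph_n}\cdot\mu'_{\coph_n}}{\iota(n)}$ / $\frac{\mu_{\ex_n}\cdot\mu'_{\coexo_n}}{\iota(n)}$ terms — are exactly the ones that contribute to the equivalence class of the post-reaction system, and each prefix constant contributes a factor $\iota(n)$. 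After the cancellations (the $\iota(n)$ in the denominator cancels one factor $\iota(n)$ coming from the co-action prefix) one is left with $\iota(n)=r_n$, which is precisely the rate $r$ stored in the reaction $O$, and the combinatorial multiplicities line up with the binomial coefficients computed by $propensity(\_)$ and $S^*(\_)$ in Figure~\ref{fig:gsam}. Thus the propensity $a$ returned along the transition equals $\mu_{id}([\mabc{S}_x])$.

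The main obstacle I anticipate is the matching of multiplicities: the machine's $propensity$ uses $S^*(\cell{\sigma}{^x_y})$, which multiplies a species' own population by the populations of all its ancestor compartments, whereas the SOS side produces the same count through the nested sums $\sum_{\cdots=X}$ ranging over all ways of decomposing the equivalence class $X$, together with the $\mtimes{}{}$/$\mpar{}{}$ operators distributing a measure over all congruent rearrangements of a parallel composition. Showing these two counts agree — in particular that the copy-on-write unfolding performed by $cow$ turns the "flat" multiset count into exactly the nested product $S^*$, and that no redex is counted twice or missed — is the delicate combinatorial heart of the argument. I would isolate this as an auxiliary lemma stating $propensity(O,S') = \mu_{id}([\mabc{S}_x])$ for the unfolded state $S'$, proved by induction on the compartment hierarchy of $P$, and otherwise the proof is a routine, if lengthy, traversal of the definitions.
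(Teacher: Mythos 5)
Your proposal is correct and follows essentially the same route as the paper's proof: a case analysis on the shape of the selected reaction (pinocytosis vs.\ the two binary cases), explicit identification of the redex in $P$, the observation that $cow$ preserves the translation up to $\equiv$, and a compositional unwinding of the $id$-component of $\mu$ through the parallel, nesting and composition operators down to the prefix constants, with exactly the $\iota(n)$ cancellation you describe. The only difference is that where you propose an auxiliary lemma matching $propensity$ and $S^*$ against the nested sums of the SOS, the paper short-circuits this by invoking additivity of measures to restrict to the case where the whole process is the redex, so that all multiplicities are $1$ and the rate computation reduces to a single prefix contribution.
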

\begin{proof}
  The proof is by cases on which reaction rule $O$ is selected by the
  function $next$.  By additivity of measures, we can restrict
  ourselves to when the whole process $P$ is the redex of the
  reduction.  Let us see here the case when $P$ is the redex of a
  (red-pino) (another is in Appendix~\ref{sec:longsound}).
  
  Let $P = \cell{\pino_n(\rho).\sigma|\sigma_0}{P'}$ and let us assume
  that $\sigma_0$ does not exhibit a $\pino_n$ action. Then, the
  translation of $P$ is $\bamc{P}_x = E \vdash
  species_{\emptyset,x}(P) \oplus (0,\emptyset,\emptyset)$, where
  \begin{align*}
    species_{\emptyset,x}(P) &= \{\cell{s(\pino_n(\rho).\sigma|\sigma_0)}{^x_y} \mapsto 1\} \cup species_{\{y\},x}(P')\\
    &= \{\cell{(s(\pino_n(\rho).\sigma) \cup s(\sigma_0))}{^x_y} \mapsto 1\} \cup species_{\{y\},x}(P')\\
    &= \{\cell{(\{\pino_n(\rho).\sigma\} \cup s(\sigma_0))}{^x_y} \mapsto 1\}
    \cup species_{\{y\},x}(P')
  \end{align*}
  Let $I_1 = \cell{(\{\pino_n(\rho).\sigma\} \cup
    s(\sigma_0))}{^x_y}$ and $J_{P'} = species_{\{y\},x}(P')$; then
  $\bamc{P}_x = I_1 \mapsto 1 \oplus J_{P'} \oplus (0,\emptyset,\emptyset) =
  J_{P'} \oplus (0,S', R')$ where
\begin{align*}
S' &= \{I_1 \mapsto 1\}\\
R' &= init(L, (0,S',\emptyset)) = \{O_L \mapsto (t_1,a_1)\}\\
O_L &= (\{I_1 \mapsto 1\}, r_n, id, \{\cell{(s(\sigma) \cup  s(\sigma_0))}{^x_y} \mapsto 1, \cell{s(\rho)}{^y_w} \mapsto 1\}) \\
L &= reactions(I_1 \mapsto 1,\emptyset) = unary(I_1) = \{(\{I_1 \mapsto 1\}, r_n, id, \{\cell{(s(\sigma) \cup s(\sigma_0))}{^x_y} \mapsto 1, \cell{s(\rho)}{^y_w} \mapsto 1\} )\}
\end{align*}
Now, the reaction $O$ in $\bamc{P}_x \trs{a, O} T$ is
$O_L$ (otherwise it would involve $P'$, not the pino of the whole
$P$). This means that $\bamc{P}_x \trs{a, O} T$ is derived by means of
an application of the (\ref{rl:reac}) as follows, where $S_1 = \{I_1 \mapsto 1\}$
and $S_2 = \{\cell{(s(\sigma) \cup s(\sigma_0))}{^x_y} \mapsto 1,
\cell{s(\rho)}{^y_w} \mapsto 1\}$:

		\begin{equation*}
		\rl{E \vdash \{J_{P'}\} \oplus (0, S', R') \trs{a_1, (S_1,r_n,id,S_2)} norm(E \cup fn(S_2) \vdash (\{J_{P'}\} \oplus S_2 \oplus ((t_1,S'',R'') \ominus S_1)))}%
			  {((S_1,r_n,id,S_2), a_1, t_1) = next(0,S',R') \quad (E' \vdash (0,S'',R'')) = cow(E \vdash (0,S',R'),S_1)}
		\end{equation*}

where $S'' = S'$ and $R'' = R'$.

$\{J_{P'}\} \oplus S_2 \oplus ((t_1,S'',R'') \ominus S_1) =$\\
\hspace*{30mm} $= J_{P'} \oplus \{\cell{(s(\sigma) \cup s(\sigma_0))}{^x_y} \mapsto 1,
\cell{s(\rho)}{^y_w} \mapsto 1\} \oplus ((t_1,S',R') \ominus \{I_1 \mapsto 1\})$\\
\hspace*{30mm} $= J_{P'} \oplus \{\cell{(s(\sigma) \cup s(\sigma_0))}{^x_y} \mapsto 1, \cell{s(\rho)}{^y_w} \mapsto 1\} \oplus (t_1, \{I_1 \mapsto 0\}, \{O_L \mapsto (t_2,a_2)\})$\\
\hspace*{30mm} $= J_{P'} \oplus J' \oplus (t_1, \{I_1 \mapsto 0\}, \{O_L \mapsto (t_2,a_2)\})$\\

Now let us define $Q$ as $Q =
\cell{\sigma|\sigma_0}{\cell{\rho}{\void} \comp P'}$, then $\bamc{Q}_x
= species_{\emptyset,x}(Q) \oplus (0,\emptyset,\emptyset)$ where
\begin{align*}
  species_{\emptyset,x}(Q) &= species_{\emptyset,x}(\cell{\sigma|\sigma_0}{\cell{\rho}{\void} \comp P'})\\
  &= \{\cell{s(\sigma|\sigma_0)}{^x_y} \mapsto 1\} \cup species_{\{x\},y}(\cell{\rho}{\void} \comp P')\\
  &= \{\cell{(s(\sigma) \cup s(\sigma_0))}{^x_y} \mapsto 1\} \cup species_{\{x\},y}(\cell{\rho}{\void}) \cup species_{\{x\},y}(P')\\
  &= \{\cell{(s(\sigma) \cup
    s(\sigma_0))}{^x_y} \mapsto 1\}\cup
  \{\cell{s(\rho)}{^y_w} \mapsto 1\} \cup \emptyset \cup
  J_{P'} = J' \cup J_{P'}
\end{align*}
and hence $Q \equiv\mabc{J_{P'}\oplus J' \oplus (t_1, \{I_1 \mapsto
  0\}, \{O_L \mapsto (t_2,a_2)\})}_x$.
It remains to prove that $r_n = \mu_{id}([Q])$. Let us notice that
the derivation of $P\to\mu$ is actually as follows:
\begin{prooftree}
	\AxiomC{$\pino_n(\rho).\sigma \trs{} [\pino_n]_{\sigma}^\rho$}
	\AxiomC{$\sigma_0 \trs{} \mu''$}
	\LeftLabel{(par)}
	\BinaryInfC{$\pino_n(\rho).\sigma|\sigma_0 \trs{} [\pino_n]^{\rho}_{\sigma} \mpar{\pino_n(\rho).\sigma}{\sigma_0} \mu''$}
	\AxiomC{$P' \trs{} \mu'$}
	\LeftLabel{(loc)}
	\BinaryInfC{$\cell{\pino_n(\rho).\sigma|\sigma_0}{P'} \trs{} \mu$}
\end{prooftree}
where $\mu = \mu' \mAt{\pino_n(\rho).\sigma|\sigma_0}{P'}
([\pino_n]^{\rho}_{\sigma} \mpar{\pino_n(\rho).\sigma}{\sigma_0}
\mu'')$.  Then:
\begin{align*}
\mu_{id}([\cell{\sigma|\sigma_0}{\cell{\rho}{\void} \comp P'}]) & =
 (\mu' \mAt{\pino_n(\rho).\sigma|\sigma_0}{P'} ([\pino_n]^{\rho}_{\sigma}~ \mpar{\pino_n(\rho).\sigma}{\sigma_0} \mu''))_{id}([\cell{\sigma|\sigma_0}{\cell{\rho}{\void} \comp P'}])\\
&= \mu'_{id}([\cell{\sigma|\sigma_0}{\cell{\rho}{\void} \comp P'}]) + 
 ([\pino_n]^{\rho}_{\sigma} \mpar{\pino_n(\rho).\sigma}{\sigma_0} \mu'')_{\pino_n} ([\sigma|\sigma_0] \times [\rho])\\
&= ([\pino_n]^{\rho}_{\sigma} \mpar{\pino_n(\rho).\sigma}{\sigma_0} \mu'')_{\pino_n} ([\sigma|\sigma_0] \times [\rho])\\
&= ([\pino_n]^{\rho}_{\sigma})_{\pino_n} ([\sigma|\sigma_0] \times [\rho]) + \mu''_{\pino_n} ([\sigma|\sigma_0]\times [\rho])\\
&= r_n + \mu''_{\pino_n} ([\sigma|\sigma_0] \times [\rho])  = r_n
\end{align*}
where the last equivalence holds because $\mu''_{\pino_n} ([\sigma|\sigma_0] \times
[\rho]) = 0$ because we assumed that the reaction does not involve $\sigma_0$.
\end{proof}

\begin{proposition}[Progress]\label{prop:progress}
  For all processes $P,Q$, if $P\react Q$ then there exists a reaction
  $O$ and a term $T$ such that $\bamc{P}_x \trs{a,O} T$ and $Q \equiv \mabc{T}_x$.
\end{proposition}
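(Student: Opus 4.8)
The plan is to mirror the structure of the Soundness proof (Proposition~\ref{prop:soundness}), but in the opposite direction: starting from a reduction $P \react Q$, we exhibit a matching machine transition. First I would use the reduction rules (red-loc), (red-comp), and (red-equiv) together with the additivity/compositionality of $species_{E,x}(\_)$ to reduce to the case where the whole of $P$ is the redex. That is, it suffices to treat the three ``axiom'' cases (red-phago), (red-exo), (red-pino) directly; the contextual rules then follow because $species_{E,x}(P \comp R) = species_{E,x}(P) \cup species_{E,x}(R)$ (modulo $\alpha$-renaming of node names) and because the $reactions(\_)$ function only examines the species locally involved, so a reaction available for the redex sub-term remains available, with the appropriate propensity, in the larger population.

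Next, for each of the three base cases I would compute $\bamc{P}_x$ explicitly (using the normal-form representation and the clauses for $species$ and $s$), identify inside the generated reaction set $R$ the reaction $O$ corresponding to the redex — namely the element of $unary_E(I_1)$ coming from $\pino_n$ in the (red-pino) case, and the appropriate element of $binary_E(I_1,S)$ in the (red-exo) and (red-phago) cases — and check that this $O$ is indeed eligible, i.e.\ that $next(t,S,R)$ can return it and that its propensity $a = propensity(O,S)$ is strictly positive (this uses that the reactants $S_1$ each have multiplicity at least $1$, so the binomial coefficients in $propensity$ are $\ge 1$, and $r_n > 0$). Firing (\ref{rl:reac}) then produces a term $T = norm(E' \cup fn(S_2) \vdash f(S_2 \oplus ((t',S',R') \ominus S_1)))$. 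The remaining work is to verify $Q \equiv \mabc{T}_x$: I would compute $species_{\emptyset,x}(Q)$ for the right-hand side $Q$ of the reduction rule (exactly as done for the (red-pino) case in the Soundness proof, where one shows $species_{\emptyset,x}(Q) = J' \cup J_{P'}$), and then show that the multiset obtained by applying $cow$, $\ominus S_1$, $\oplus S_2$ and $f$ to $\bamc{P}_x$ yields the same species set up to $\equiv$ on species and renaming of the freshly introduced node name $w$; finally apply Lemma~2.? (the one stating $\mabc{\bamc{P}_x}_x \equiv P$, here used as $\mabc{species_{\emptyset,x}(Q)}_x \equiv Q$) to conclude $\mabc{T}_x \equiv Q$.

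I expect the main obstacle to be the bookkeeping of node names under $cow$ and $dup$, and the reduction of the contextual case to the base cases. Since $cow$ unfolds multiplicities and allocates fresh names $y', z', w'$, the species set $S'$ produced before firing is not literally $species_{\emptyset,x}(Q)$ but an $\equiv$-and-$\alpha$-equivalent expansion of it; one must argue that $\mabc{\_}_x$ collapses this expansion back correctly (this is essentially the content of the ``well-defined'' lemmas for $root(\_)$ and $\mabc{\_}_x$, which guarantee the name order stays well-founded and the root is preserved). For the contextual rules, the subtlety is that in $P \comp R$ the $\oplus$ of the two species sets may merge species that were distinct in the sub-terms, changing multiplicities; but this only increases propensities by the law of mass action and never removes the reaction $O$, so a transition still exists — I would phrase this as: if $O \in reactions_E(I_1 \mapsto i, S_{\mathit{redex}})$ and $S_{\mathit{redex}} \subseteq S$, then $O \in reactions_E(I_1 \mapsto i, S)$ and $propensity(O,S) \ge propensity(O,S_{\mathit{redex}}) > 0$, hence $next$ may select $O$ (the Next Reaction method is a stochastic schedule, so \emph{some} firing time makes $O$ the next reaction). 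The rest is the same routine computation as in Proposition~\ref{prop:soundness}, carried out symmetrically.
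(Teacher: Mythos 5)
Your proposal matches the paper's proof in both structure and substance: the paper argues by induction on the derivation of $P\react Q$, handles the contextual rules via the compositionality of $species_{E,x}(\_)$, and for the base cases explicitly computes $\bamc{P}_x$, identifies the reaction $O_L$ produced by $reactions(\_)$, fires (\ref{rl:reac}), and checks $Q\equiv\mabc{T}_x$ --- exactly as you outline. Your additional care about the eligibility of $O$ under $next$ and the name bookkeeping through $cow$/$dup$ goes slightly beyond what the paper writes down (it simply asserts ``by the reaction rule we can take $T=\dots$''), but it is the same argument.
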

\begin{proof}
  By induction on the derivation of $P\react Q$.  Let us see the case
  of (red-pin), the others being similar.  Let $P =
  \cell{\pino(\rho).\sigma | \sigma_0}{P'}$ and $Q = \cell{\sigma |
    \sigma_0}{\cell{\rho}{\void} \comp P'}$.  Then, 
\[ \bamc{P}_x =  species_{\{y\},x}(P') \oplus
  (0,\{\cell{(\{\pino_n(\rho).\sigma\} \cup s(\sigma_0))}{^x_y}
  \mapsto 1\}, \{O_L \mapsto (t_1,a_1)\})
\]
where $O_L = (\{\cell{(\{\pino_n(\rho).\sigma\} \cup
  s(\sigma_0))}{^x_y} \mapsto 1\}, r_n, id, \{\cell{(s(\sigma) \cup
  s(\sigma_0))}{^x_y} \mapsto 1, \cell{s(\rho)}{^y_w} \mapsto 1 \})$.  Then, by the
(\ref{rl:reac}) we can take
$ T = species_{\{x\},y}(P') \oplus (t_1,\{ \cell{(s(\sigma) \cup
    s(\sigma_0))}{^x_y} \mapsto 1, \cell{s(\rho)}{^y_w} \mapsto 1\},
  \emptyset)
$.
It is easy to check that $Q\equiv \mabc{T}_x$.
\end{proof}

\begin{proposition}[Completeness]\label{prop:completeness}
  For all processes $P,Q$, if $P\trs{} \mu$ and $\mu_{id}([Q]) > 0$ then
  for all node name $x$, there exists a reaction $O$ and a term $T$
  such that $\bamc{P}_x \trs{a,O} T$, $Q\equiv \mabc{T}_x$ and
  $a=\mu_{id}([Q])$.
\end{proposition}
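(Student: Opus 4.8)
The plan is to reduce the statement to results already available: Proposition~\ref{prop:conservative} (conservativity of the stochastic semantics over the reduction semantics), Proposition~\ref{prop:progress} (Progress) and Proposition~\ref{prop:soundness} (Soundness), glued together by the uniqueness of the family of measures attached to a process (Lemma~\ref{lm:uniqueness}). Essentially no new computation is needed here; the content is in lining up the three statements.

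Concretely, fix $P,Q$ with $P\to\mu$ and $\mu_{id}([Q])>0$, and fix a node name $x$. By Proposition~\ref{prop:conservative}, $P\react Q$. Applying Proposition~\ref{prop:progress} to $P\react Q$ with root name $x$ yields a reaction $O$ and a machine term $T=E\vdash(t,S,R)$ with $\bamc{P}_x\trs{a,O}T$ and $Q\equiv\mabc{T}_x=\mabc{S}_x$, which already gives two of the three required conclusions; it remains to identify the propensity $a$ with $\mu_{id}([Q])$. For this, apply Proposition~\ref{prop:soundness} to that same transition $\bamc{P}_x\trs{a,O}T$: it produces some $\mu'$ with $P\to\mu'$ and $\mu'_{id}([\mabc{S}_x])=a$. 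By Lemma~\ref{lm:uniqueness} the family of measures of $P$ is unique, hence $\mu'=\mu$; and since $Q\equiv\mabc{S}_x$ we have $[Q]=[\mabc{S}_x]$ as elements of the base of $\Pi$. Therefore
\[
a=\mu'_{id}([\mabc{S}_x])=\mu_{id}([\mabc{S}_x])=\mu_{id}([Q]),
\]
which is the missing equality (and in particular $a>0$, as expected). So $O$ and $T$ witness the claim.

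The only genuine point to watch — and the main obstacle — is the coherence between the two invocations of the semantics. Progress is proved by induction on the derivation of $P\react Q$ and only asserts the existence of \emph{some} machine transition reaching a term whose decoding is $\equiv$-congruent to $Q$; it says nothing about rates. Soundness is proved by cases on the reaction rule selected by $next$ and, using the additivity of measures to restrict to the case where the whole $P$ is the redex, pins the propensity of \emph{that} transition to the $id$-measure of the decoding of its target. One must make sure the redex/context decomposition used in the Progress construction is the one to which the Soundness case analysis applies, so that the $\mu'$ delivered by Soundness is forced by Lemma~\ref{lm:uniqueness} to be exactly the hypothesised $\mu$ and the set measured by Soundness is exactly $[Q]$. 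Once this alignment is made, the two rates coincide on the nose and the proposition follows.
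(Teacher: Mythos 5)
Your proof follows exactly the paper's own argument: conservativity gives $P\react Q$, Progress supplies the machine transition and the decoding, and Soundness pins down the propensity, with Lemma~\ref{lm:uniqueness} (which the paper leaves implicit) identifying the two families of measures. The coherence issue you flag between the Progress and Soundness constructions is a fair observation, but the overall route is the same as the paper's.
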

\begin{proof}
  If $P\trs{} \mu$ and $\mu_{id}([Q]) > 0$ then $P\react Q$ by
  Prop.~\ref{prop:conservative}.  By Prop.~\ref{prop:progress}, we
  have that for some $a,O,T$, $\bamc{P}_x \trs{a,O} T$ and $Q\equiv
  \mabc{T}_x$. But then $a=\mu_{id}([Q])$ by soundness
  (Prop.~\ref{prop:soundness}).
\end{proof}

\subsection{Example}
We conclude this section with an example.
Let $P = 10000 \cdot \cell{\phago_n.\exo_m}{\cell{\phago_k}{}} \comp
100 \cdot (\cell{(\cophago_n(\coexo_m) \mid
\coexo)}{\cell{\phago_k}{}})$. Then, its reductions in the
Brane Calculus are as follows:
\begin{align*}
&10000 \cdot \cell{\phago_n.\exo_m}{\cell{\phago_k}{}} \comp 100 \cdot (\cell{(\cophago_n(\coexo_m) \mid \coexo)}{\cell{\phago_k}{}})\\
&\react 9999 \cdot \cell{\phago_n.\exo_m}{\cell{\phago_k}{}} \comp 99 \cdot (\cell{(\cophago_n(\coexo_m) \mid \coexo)}{\cell{\phago_k}{}}) \comp \cell{\coexo}{\cell{\coexo_m}{\cell{\exo_m}{\cell{\phago_k}{}}} \comp \cell{\phago_k}{}} \\
&\react 9999 \cdot \cell{\phago_n.\exo_m}{\cell{\phago_k}{}} \comp 99 \cdot (\cell{(\cophago_n(\coexo_m) \mid \coexo)}{\cell{\cophago_k()}{}}) \comp \cell{\coexo}{2 \cdot \cell{\cell{\phago_k}{} \comp \void}{\void} }
\end{align*}
The translation of $P$ is $\bamc{P}_x = E \vdash
species_{\emptyset,x}(P) \oplus (0,\emptyset,\emptyset)$, where
\begin{align*}
species_{\emptyset,x}(P) &= species_{\{x\}}(10000 \cdot
\cell{\phago_n.\exo_m}{\cell{\phago_k}{}}
\circ 100 \cdot \cell{(\cophago_n(\coexo_m) \mid \coexo)}{\cell{\phago_k}{}})\\
&= \{s(\cell{\phago_n.\exo_m)}{^x_y} \mapsto 10000\} \cup
species_{\{x,y\},y}(\cell{\phago_k}{}) \cup
\{s\cell{(\cophago_n(\coexo_m) \mid \coexo)}{^x_z} \mapsto 100\}\\
&\phantom{= }~ \cup species_{\{x,z\},z}(\cell{\phago_k}{})\\
&= \{\cell{\{\phago_n.\exo_m\}}{^x_y} \mapsto 10000, \cell{\{\cophago_n(\coexo_m), \coexo\}}{^x_z} \mapsto 100, \cell{\{\phago_k\}}{^y_w} \mapsto 1, \\
&\phantom{= }~ \cell{\{\phago_k\}}{^z_v} \mapsto 1\}
\end{align*}
Let $I_1 = \cell{\{\phago_n.\exo_m\}}{^x_y}$, $I_2 =
\cell{\{\cophago_n(\coexo_m), \coexo\}}{^x_z}$, $I_3 =
\cell{\{\phago_k\}}{^y_w}$, $I_4
=\cell{\{\phago_k\}}{^z_v}$, $r_n = 10 s^{-1}$, $r_k = 5 s^{-1}$ and $r_m = 5 s^{-1}$; then
\begin{align*}
\bamc{P}_x & = I_1 \mapsto 10000 \oplus I_2 \mapsto 100 \oplus
I_3 \mapsto 1 \oplus I_4 \mapsto 1 \oplus (0,\emptyset,\emptyset) \\
& = I_2 \mapsto 100 \oplus I_3 \mapsto 1 \oplus I_4 \mapsto 1 \oplus (0,S_1,R_1) \\
& = I_3 \mapsto 1 \oplus I_4 \mapsto 1 \oplus (0,S_2,R_2)\\
& = I_4 \mapsto 1 \oplus (0,S_3,R_3) = (0,S_4,R_4)
\end{align*}
where
\begin{align*}
L_1 &= reactions(I_1 \mapsto 10000, \emptyset) = \emptyset \\
S_1 &= S\{I_1 \mapsto 10000\} \\
R_1 &= R \cup init(L_1, (0,S_1,\emptyset)) = \emptyset
\\[1ex]
L_2 &= reactions(I_2 \mapsto 100,S_1) \\
&= (\{I_2 \mapsto 1,I_1 \mapsto 1\}, 10, id, \{\cell{\{\coexo\}}{^x_z} \mapsto 1,
\cell{\{\coexo_m)\}}{^z_w} \mapsto 1, \cell{\{\exo_m\}}{^w_y} \mapsto 1\})
\\
S_2 &= S_1\{I_2 \mapsto 100\} = \{I_1 \mapsto 10000, I_2 \mapsto 100\}
\\
R_2 &= R_1 \cup init(L_2, (0,S_2,R_1)) = \{O_{L_2} \mapsto (t_1,a_1)\}
\\[1ex]
L_3 &= reactions(I_3 \mapsto 1, S_2) = \emptyset
\\
S_3 &= S_2\{I_3 \mapsto 1\} = \{I_1 \mapsto 10000, I_2 \mapsto 100, I_3 \mapsto 1\}
\\
R_3 &= R_2 \cup init(L_3, (0,S_3,R_2)) = R_2
\\[1ex]
L_4 &= reactions(I_4 \mapsto 1, S_3) = \emptyset
\\
S_4 &= S_3\{I_4 \mapsto 1\} = \{I_1 \mapsto 10000, I_2 \mapsto 100, I_3 \mapsto 1, I_4 \mapsto 1\}
\\
R_4 &= R_3 \cup init(L_4, (0,S_4,R_3)) = R_3
\end{align*}
with
$\begin{array}[t]{rl}
O_{L_2} & = (\{I_2 \mapsto 1,I_1 \mapsto 1\}, 10, id, \{\cell{\{\coexo\}}{^x_z} \mapsto 1,
\cell{\{\coexo_m\}}{^z_w} \mapsto 1, \cell{\{\exo_m\}}{^w_y} \mapsto 1\}) \\
a_1 &= propensity(O_{L_2},S_2) = 10000000 \\
t_1 &= 0 + delay(10,10000000).
\end{array}$
\\
Now, the reaction $O$ in $\bamc{P}_x \trs{a, O} T$ is $O_L$. This
means that $\bamc{P}_x \trs{a, O} T$ is derived by means of an
application of the (\ref{rl:reac}) as follows, where $E = x,y,z,w$,
$S_1 = \{I_2 \mapsto 1,I_1 \mapsto 1\}$, and $S_2 =
\{\cell{\{\coexo\}}{^x_z} \mapsto 1,
\cell{\{\coexo_m\}}{^z_w} \mapsto 1, \cell{\{\exo_m\}}{^w_y} \mapsto 1\}$:
\begin{equation*}
\rl{\begin{array}{l}
E \vdash
(0, S_4, R_4) \trs{10, (S_1,10,id,S_2)} norm(E' \cup fn(S_2) \vdash
(S_2
\oplus ((t_1,S_5,R_5) \ominus S_1)))
\end{array}
}%
{((S_1,10,id,S_2), 10, t_1) = next(0,S_4,R_4) \quad (E' \vdash (0,S_5,R_5)) = cow(E \vdash (0,S_4,R_4),S_1)}
\end{equation*}
where
\begin{align*}
S_5 &= \{I_1 \mapsto 1, \cell{\{\phago_n.\exo_m\}}{^x_{y'}} \mapsto 9999, I_2 \mapsto 1, \cell{\{\cophago_n(\coexo_m),\coexo\}}{^x_{z'}} \mapsto 99, I_3 \mapsto 1, \\
& \phantom{\ = \{} I_4 \mapsto 1, \cell{\{\phago_k\}}{^{y'}_{w'}} \mapsto 1, \cell{\{\phago_k\}}{^{z'}_{v'}} \mapsto 1\}
\\
R_5 &= \{O_{L_2} \mapsto (t_1,a_1), O_1  \mapsto (t_2,a_2), O_2 \mapsto (t_3,a_3), O_3 \mapsto (t_4,a_4)\}
\intertext{with}
O_1 &= (\{\cell{\{\phago_n.\exo_m\}}{^x_{y'}} \mapsto 1,I_2 \mapsto 1\}, 10, id, 
\{\cell{\{\coexo\}}{^x_z} \mapsto 1,
\cell{\{\coexo_m\}}{^z_{w''}} \mapsto 1, \cell{\{\exo_m\}}{^{w''}_{y'}} \mapsto 1\})\\
O_2 &= (\cell{\{\cophago_n(\coexo_m),\coexo\}}{^x_{z'}} \mapsto 1,I_1 \mapsto 1\}, 10, id, \{\cell{\{\coexo\}}{^x_{z'}} \mapsto 1,
\cell{\{\coexo_m\}}{^{z'}_{w'''}} \mapsto 1, \cell{\{\exo_m\}}{^{w'''}_{y}} \mapsto 1\})\\
O_3 &= (\{\cell{\{\phago_n.\exo_m\}}{^x_{y'}} \mapsto 1, \cell{\{\cophago_n(\coexo_m),\coexo\}}{^x_{z'}} \mapsto 1\},10,id,
\{\cell{\{\cophago\}}{^x_{z'}} \mapsto 1, \cell{\{\coexo_m\}}{^{z'}_{z''}} \mapsto 1, \cell{\{\exo_m\}}{^{z''}_{y'}}\})\\
a_2 &= propensity(O_1,S_5) = 99990 \\
a_3 &= propensity(O_2,S_5) = 990 \\
a_4 &= propensity(O_3,S_5) = 989901 \\
t_2 &= 0 + delay(10,99990)\\
t_3 &= 0 + delay(10,990)\\
t_4 &= 0 + delay(10,989901)
\end{align*}
We can now compute the multiset of the new machine state:
\begin{align*}
S_2 \oplus ((t_1,S_5,R_5) \ominus S_1) &= S_2 \oplus ((t_1,S_5,R_5) \ominus \{I_2 \mapsto 1,I_1 \mapsto 1\})\\
&= S_2 \oplus ((t_1,S_6, R_6\}) \ominus \{I_1 \mapsto 1\})\\
&= \{\cell{\{\coexo\}}{^x_z} \mapsto 1,
\cell{\{\coexo_m\}}{^z_w} \mapsto 1, \cell{\{\exo_m\}}{^w_y} \mapsto 1\}
\oplus (t_1,S_7, R_7)\\
&= (t_1, S_8, R_8)
\end{align*}
with
\begin{alignat*}{3}
a'_2 &= propensity(O_1,S_6) = 0
&
S_8 &= S_6 \cup \{\cell{\{\coexo\}}{^x_z} \mapsto 1,
 \cell{\{\coexo_m\}}{^z_w} \mapsto 1, \cell{\{\exo_m\}}{^w_y} \mapsto 1\}
\\
t'_2 &= t_1 + (a_2 / a'_2)(t_2 - t_1)
&
R_6 &= \{O_1 \mapsto (t'_2,a'_2)\}
\\%
a'_3 &= propensity(O_2,S_7) = 0 \qquad\quad
&
R_7 &= R_6 \cup \{O_2 \mapsto (t'_3,a'_3)\}
\\
t'_3 &= t_1 + (a_3 / a_5)(t_3 - t_1)
&
R_8 &= R_7 \cup \{\{\cell{\{\coexo_m\}}{^z_w} \mapsto 1, \cell{\{\exo_m\}}{^w_y} \mapsto 1\},5,f,\{\cell{\emptyset}{^z_w}\} \mapsto 1\}
\\
S_6 &= S_5 \backslash \{I_2 \mapsto 1\}\\
S_7 &= S_6 \backslash \{I_1 \mapsto 1\}
&
f &= \lambda T.T[y := z]
\end{alignat*}

\section{Conclusions}\label{sec:concl}
In this paper, we have presented an abstract machine for the
Stochastic Brane Calculus.  Instead of defining an \emph{ad hoc}
machine, we have adopted the generic abstract machine for stochastic
calculi (GSAM) recently introduced by Lakin, Paulev\'e and Phillips.
According to the encoding technique we have adopted, membranes are
flattened into a set of species, where the hierarchical structure is
represented by means of \emph{names}. In order to keep track of these
names, and for dealing efficiently with multiple copies of the same
species, we have introduced a new generic abstract machine, called
COWGSAM, which extends the GSAM with a \emph{name environment} and a
\emph{copy-on-write} optimization strategy.  We have proved that the
implementation of the Stochastic Brane Calculus in COWGSAM is adequate
with respect to the stochastic structural operational semantics of the
calculus given in \cite{bm:tcs12}.

We think that COWGSAM can be used for implementing other stochastic
calculi dealing with nested structures, also beyond the models for
systems biology.  In particular, it is interesting to apply this
approach to Stochastic Bigraphs \cite{kmt:mfps08}, a general
meta-model well-suited for representing a range of stochastic systems
with compartments; in this way we would obtain a \emph{General
  Stochastic Bigraphical Machine}, which could be instantiated to any
given stochastic bigraphic reactive system. However, such a machine
would not scale well, as in general the COW strategy may be not very
useful; thus, we can restrict our attentions to smaller subsets of
BRSs, specifically designed to some application domain.  For
biological applications, the bigraphic reactive systems considered in
\cite{bgm:biobig,dhk:fcm} might be a more reasonable target.

Another interesting question is about the expressive power of GSAM and
COWGSAM. We think that GSAM correspond to stochastic (multiset) Petri
nets, but COWGSAM could go further thanks to the possibility of
creating unlimited new names during execution.
Further work include comparison with other stochastic simulation tools
dealing with compartments, like BioPEPA \cite{cg:biopepa}.

\smallskip

\noindent\textbf{Acknowledgment}
Work funded by MIUR PRIN project ``SisteR'',   prot.~20088HXMYN.

\vspace{-1ex}

\bibliographystyle{eptcs}
\bibliography{allbib}

\begin{thebibliography}{10}
\providecommand{\bibitemstart}[1]{\bibitem{#1}}
\providecommand{\bibitemend}{}
\providecommand{\bibliographystart}{}
\providecommand{\bibliographyend}{}
\providecommand{\url}[1]{\texttt{#1}}
\providecommand{\urlprefix}{doi:}
\providecommand{\bibinfo}[2]{#2}
\providecommand{\doiurl}[1]{\urlprefix\href{http://dx.doi.org/#1}{#1}}
\bibliographystart

\bibitemstart{bgm:biobig}
\bibinfo{author}{Giorgio Bacci}, \bibinfo{author}{Davide Grohmann} \&
  \bibinfo{author}{Marino Miculan} (\bibinfo{year}{2009}):
  \emph{\bibinfo{title}{Bigraphical models for protein and membrane
  interactions}}.
\newblock In: \bibinfo{editor}{Gabriel Ciobanu}, editor: {\sl
  \bibinfo{booktitle}{Proc.~MeCBIC'09}}, {\sl
  \bibinfo{series}{EPTCS}}~\bibinfo{volume}{11}, pp. \bibinfo{pages}{3--18}.
\newblock \doiurl{10.4204/EPTCS.11.1}.
\bibitemend

\bibitemstart{bm:tcs12}
\bibinfo{author}{Giorgio Bacci} \& \bibinfo{author}{Marino Miculan}
  (\bibinfo{year}{2012}): \emph{\bibinfo{title}{Measurable Stochastics for
  {Brane Calculus}}}.
\newblock {\sl \bibinfo{journal}{Theoretical Computer Science}}
  \bibinfo{volume}{431}, pp. \bibinfo{pages}{117--136}.
\newblock \doiurl{10.1016/j.tcs.2011.12.055}.
\bibitemend

\bibitemstart{bmsmt:cls}
\bibinfo{author}{Roberto Barbuti}, \bibinfo{author}{Andrea Maggiolo-Schettini},
  \bibinfo{author}{Paolo Milazzo} \& \bibinfo{author}{Angelo Troina}
  (\bibinfo{year}{2007}): \emph{\bibinfo{title}{The Calculus of Looping
  Sequences for Modeling Biological Membranes}}.
\newblock In: \bibinfo{editor}{George Eleftherakis}, \bibinfo{editor}{Petros
  Kefalas}, \bibinfo{editor}{Gheorghe Paun}, \bibinfo{editor}{Grzegorz
  Rozenberg} \& \bibinfo{editor}{Arto Salomaa}, editors: {\sl
  \bibinfo{booktitle}{Workshop on Membrane Computing}}, {\sl
  \bibinfo{series}{Lecture Notes in Computer Science}} \bibinfo{volume}{4860},
  \bibinfo{publisher}{Springer}, pp. \bibinfo{pages}{54--76}.
\newblock \doiurl{10.1007/978-3-540-77312-2\_4}.
\bibitemend

\bibitemstart{cardelli04:bc}
\bibinfo{author}{Luca Cardelli} (\bibinfo{year}{2004}):
  \emph{\bibinfo{title}{Brane Calculi}}.
\newblock In: \bibinfo{editor}{Vincent Danos} \& \bibinfo{editor}{Vincent
  Sch{\"a}chter}, editors: {\sl \bibinfo{booktitle}{Proc.~CMSB}}, {\sl
  \bibinfo{series}{Lecture Notes in Computer Science}} \bibinfo{volume}{3082},
  \bibinfo{publisher}{Springer}, pp. \bibinfo{pages}{257--278}.
\newblock \doiurl{10.1007/978-3-540-25974-9\_24}.
\bibitemend

\bibitemstart{cm:quest10}
\bibinfo{author}{Luca Cardelli} \& \bibinfo{author}{Radu Mardare}
  (\bibinfo{year}{2010}): \emph{\bibinfo{title}{The Measurable Space of
  Stochastic Processes}}.
\newblock In: {\sl \bibinfo{booktitle}{Proc.~QEST}}, \bibinfo{publisher}{IEEE
  Computer Society}, pp. \bibinfo{pages}{171--180}.
\newblock \doiurl{10.1109/QEST.2010.30}.
\bibitemend

\bibitemstart{cg:biopepa}
\bibinfo{author}{Federica Ciocchetta} \& \bibinfo{author}{Maria~Luisa
  Guerriero} (\bibinfo{year}{2009}): \emph{\bibinfo{title}{Modelling Biological
  Compartments in {Bio-PEPA}}}.
\newblock {\sl \bibinfo{journal}{Electronic Notes in Theoretical Computer
  Science}} \bibinfo{volume}{227}, pp. \bibinfo{pages}{77--95}.
\newblock \doiurl{10.1016/j.entcs.2008.12.105}.
\bibitemend

\bibitemstart{dhk:fcm}
\bibinfo{author}{Troels~Christoffer Damgaard}, \bibinfo{author}{Espen
  H{\o}jsgaard} \& \bibinfo{author}{Jean Krivine} (\bibinfo{year}{2012}):
  \emph{\bibinfo{title}{Formal Cellular Machinery}}.
\newblock {\sl \bibinfo{journal}{Electronic Notes in Theoretical Computer
  Science}} \bibinfo{volume}{284}, pp. \bibinfo{pages}{55--74}.
\newblock \doiurl{10.1016/j.entcs.2012.05.015}.
\bibitemend

\bibitemstart{hermanns02}
\bibinfo{author}{Holger Hermanns} (\bibinfo{year}{2002}):
  \emph{\bibinfo{title}{Interactive Markov Chains: The Quest for Quantified
  Quality}}, {\sl \bibinfo{series}{Lecture Notes in Computer Science}}
  \bibinfo{volume}{2428}.
\newblock \bibinfo{publisher}{Springer}.
\newblock \doiurl{10.1007/3-540-45804-2}.
\bibitemend

\bibitemstart{kmt:mfps08}
\bibinfo{author}{Jean Krivine}, \bibinfo{author}{Robin Milner} \&
  \bibinfo{author}{Angelo Troina} (\bibinfo{year}{2008}):
  \emph{\bibinfo{title}{Stochastic Bigraphs}}.
\newblock In: {\sl \bibinfo{booktitle}{Proc.~24th MFPS}}, {\sl
  \bibinfo{series}{ENTCS}} \bibinfo{volume}{218}, pp. \bibinfo{pages}{73--96}.
\newblock \doiurl{10.1016/j.entcs.2008.10.006}.
\bibitemend

\bibitemstart{lpp:tcs12}
\bibinfo{author}{Matthew~R. Lakin}, \bibinfo{author}{Lo{\"i}c Paulev{\'e}} \&
  \bibinfo{author}{Andrew Phillips} (\bibinfo{year}{2012}):
  \emph{\bibinfo{title}{Stochastic Simulation of Multiple Process Calculi for
  Biology}}.
\newblock {\sl \bibinfo{journal}{Theoretical Computer Science}}
  \bibinfo{volume}{431}, pp. \bibinfo{pages}{181--206}.
\newblock \doiurl{10.1016/j.tcs.2011.12.057}.
\bibitemend

\bibitemstart{lt:biokappa}
\bibinfo{author}{Cosimo Laneve} \& \bibinfo{author}{Fabien Tarissan}
  (\bibinfo{year}{2008}): \emph{\bibinfo{title}{A simple calculus for proteins
  and cells}}.
\newblock {\sl \bibinfo{journal}{Theoretical Computer Science}}
  \bibinfo{volume}{404}(\bibinfo{number}{1-2}), pp. \bibinfo{pages}{127--141}.
\newblock \doiurl{10.1016/j.tcs.2008.04.011}.
\bibitemend

\bibitemstart{prakashbook09}
\bibinfo{author}{Prakash Panangaden} (\bibinfo{year}{2009}):
  \emph{\bibinfo{title}{Labelled Markov Processes}}.
\newblock \bibinfo{publisher}{Imperial College Press},
  \bibinfo{address}{London, U.K.}
\bibitemend

\bibitemstart{pylp:cmsb10}
\bibinfo{author}{Lo\"{\i}c Paulev{\'e}}, \bibinfo{author}{Simon Youssef},
  \bibinfo{author}{Matthew~R. Lakin} \& \bibinfo{author}{Andrew Phillips}
  (\bibinfo{year}{2010}): \emph{\bibinfo{title}{A generic abstract machine for
  stochastic process calculi}}.
\newblock In: \bibinfo{editor}{Paola Quaglia}, editor: {\sl
  \bibinfo{booktitle}{Proc.~CMSB}}, \bibinfo{publisher}{ACM}, pp.
  \bibinfo{pages}{43--54}.
\newblock \doiurl{10.1145/1839764.1839771}.
\bibitemend

\bibitemstart{rpscs:bioambients}
\bibinfo{author}{Aviv Regev}, \bibinfo{author}{Ekaterina~M. Panina},
  \bibinfo{author}{William Silverman}, \bibinfo{author}{Luca Cardelli} \&
  \bibinfo{author}{Ehud~Y. Shapiro} (\bibinfo{year}{2004}):
  \emph{\bibinfo{title}{BioAmbients: an abstraction for biological
  compartments}}.
\newblock {\sl \bibinfo{journal}{Theoretical Computer Science}}
  \bibinfo{volume}{325}(\bibinfo{number}{1}), pp. \bibinfo{pages}{141--167}.
\newblock \doiurl{10.1016/j.tcs.2004.03.061}.
\bibitemend

\bibliographyend
\end{thebibliography}

\appendix

\section{Some measure theory}
\label{app:measuretheory}

Given a set $M$, a family $\Sigma$ of subsets of $M$ is called
a \emph{$\sigma$-algebra} if it contains $M$ and is closed under 
complements and (infinite) countable unions:
\begin{enumerate}[itemsep=0pt]
  \item $M \in \Sigma$;
  \item $A \in \Sigma$ implies $A^{c} \in \Sigma$, 
    where $A^{c} = M \setminus A$;
  \item $\set{A_i}_{i \in \N} \subset \Sigma$ implies 
    $\bigcup_{i \in \N} A_i \in \Sigma$.
\end{enumerate}
Since $M \in \Sigma$ and $M^{c} = \emptyset$, $\emptyset \in \Sigma$,
hence $\Sigma$ is nonempty by definition.
A $\sigma$-algebra is closed under countable set-theoretic operations:
is closed under finite unions ($A, B \in \Sigma$ implies $A \cup B = A \cup B \cup \emptyset \cup \emptyset \cup \dots \in \Sigma$), countable intersections (by DeMorgan's law $A \cap B = (A^{c} \cup B^{c})^{c}$ 
in its finite and inifite version), and countable subtractions 
($A, B \in \Sigma$ implies $A \setminus B = A \cap B^{c} \in \Sigma$).

\begin{definition}[Measurable Space]
  Given a set $M$ and a $\sigma$-algebra on $M$, the tuple $(M,
  \Sigma)$ is called a \emph{measurable space}, the elements of
  $\Sigma$ \emph{measurable sets}, and $M$ the \emph{support-set}.
\end{definition}

\looseness=-1 A set $\Omega \subseteq 2^{M}$ is a \emph{generator for
  the $\sigma$-algebra $\Sigma$} on $M$ if $\Sigma$ is the closure of
$\Omega$ under complement and countable union; we write
$\sigma(\Omega) = \Sigma$ and say that $\Sigma$ is generated by
$\Omega$.  Note that the $\sigma$-algebra generated by a $\Omega$ is
also the smallest $\sigma$-algebra containing $\Omega$, that is, the
intersection of all $\sigma$-algebras that contain $\Omega$.  In
particular it holds that a completely arbitrary intersection of
$\sigma$-algebras is a $\sigma$-algebra.  A $\sigma$-algebra generated
by $\Omega$, denoted by $\sigma(\Omega)$, is minimal in the sense that
if $\Omega \subset \Sigma$ and $\Sigma$ is a $\sigma$-algebra, then
$\sigma(\Omega) \subset \Sigma$.  If $\Omega$ is a $\sigma$-algebra
then obviously $\sigma(\Omega) = \Omega$; if $\Omega$ is empty or
$\Omega = \set{\emptyset}$, or $\Omega = \set{M}$, then
$\sigma(\Omega) = \set{\emptyset, M}$; if $\Omega \subset \Sigma$ and
$\Sigma$ is a $\sigma$-algebra, then $\sigma(\Omega) \subset \Sigma$.
A generator $\Omega$ for $\Sigma$ is a \emph{base for $\Sigma$} if it
has disjoin elements. Note that if $\Omega$ is a base for $\Sigma$,
all measurable sets in $\Sigma$ can be decomposed into countable
unions of elements in $\Omega$.

A \emph{measure} on a measurable space $(M, \Sigma)$ is a function
$\mu \colon \Sigma \to \R^{+}_{\infty}$, where $\R^{+}_{\infty}$ denotes
the extended positive real line, such that
\begin{enumerate}[itemsep=0pt]
  \item $\mu(\emptyset) = 0$;
  \item for any disjoint sequence $\set{N_i}_{i \in I} \subseteq \Sigma$ with
  $I \subseteq \N$, it holds
  \begin{equation*}
    \textstyle \mu(\bigcup_{i \in I} N_i) = \sum_{i \in I} \mu(N_i) \, .
  \end{equation*}
\end{enumerate}

The triple $(M, \Sigma, \mu)$ is called a \emph{measure space}.
A measure space $(M, \Sigma, \mu)$ is called \emph{finite} if 
$\mu(M)$ is a finite real number; it is called \emph{$\sigma$-finite} if $M$ can 
be decomposed into a countable union of measurable sets of finite measure,
that is, $M = \bigcup_{i \in I} N_{i}$, for some $I \subseteq \N$ and 
$\mu(N_{i}) \in \R^{+}$ for each $i \in I$.
A set in a measure space has \emph{$\sigma$-finite measure} 
if it is a countable union of sets with finite measure.
Specifying a measure includes specifying its domain. If $\mu$ is a 
measure on a measurable space $(M, \Sigma)$ and $\Sigma'$ is
a $\sigma$-algebra contained in $\Sigma$, then the restriction $\mu'$ of 
$\mu$ to $\Sigma'$ is also a measure, and in particular a measure
on $(M', \Sigma')$, for some $M' \subseteq M$ such that $\Sigma'$ is
a $\sigma$-algebra on $M'$.

Given two measurable spaces and measures on them, one can obtain 
the product measurable space and the product measure on that space. 
Let $(M_{1}, \Sigma_{1})$ and $(M_{2}, \Sigma_{2})$ be measurable spaces,
and $\mu_{1}$ and $\mu_{2}$ be measures on these spaces.
Denote by $\Sigma_{1} \otimes \Sigma_{2}$ the $\sigma$-algebra on the 
cartesian product $M_{1} \times M_{2}$ generated by subsets of the form 
$B_{1} \times B_{2}$, said rectangles, where $B_{1} \in \Sigma_{1}$ and $B_{2} \in \Sigma_{2}$.
The \emph{product measure} $\mu_{1} \otimes \mu_{2}$ is defined to be the unique
measure on the measurable space 
$(M_{1} \times M_{2}, \Sigma_{1} \otimes \Sigma_{2})$ such that, for all 
$B_{1} \in \Sigma_{1}$ and $B_{2} \in \Sigma_{2}$
\begin{equation*}
  (\mu_{1} \otimes \mu_{2})(B_{1} \times B_{2}) = \mu_{1}(B_{1}) \cdot \mu_{2}(B_{2})
\end{equation*}
The existence of this measure is guaranteed by the Hahn-Kolmogorov theorem.
The uniqueness of the product measure is guaranteed only in the case that both
$(M_{1}, \Sigma_{1}, \mu_{1})$ and $(M_{2}, \Sigma_{2}, \mu_{2})$ are
$\sigma$-finite.

Let $\Delta(M,\Sigma)$ be the family of measures on $(M, \Sigma)$.
It can be organized as a measurable space by considering the
$\sigma$-algebra generated by the sets 
$\set{\mu \in \Delta(M, \Sigma) : \mu(S) \geq r}$, for arbitrary
$S \in \Sigma$ and $r > 0$.

Given two measurable spaces $(M, \Sigma)$ and $(N, \Theta)$ a mapping
$f \colon M \to N$ is \emph{measurable} if for any $T \in \Theta$, 
$f^{-1}(T) \in \Sigma$. Measurable functions are closed under composition:
given $f\colon M \to N$ and $g\colon N \to O$ measurable functions then
$g \circ f \colon M \to O$ is also measurable.

\vspace{-1ex}

\section{Proof of Prop.~\ref{prop:soundness}}\label{sec:longsound}

Let $P = \cell{\coexo_n.\tau|\tau_0}{\cell{\exo_n.\sigma|\sigma_0}{P'} \comp P''}$; then, $\bamc{P}_x = E \vdash species_{\emptyset,x}(P) \oplus (0,\emptyset,\emptyset)$, where
\begin{align*}
  & species_{\emptyset,x}(P) = \{\cell{s(\coexo_n.\tau|\tau_0)}{^x_y}\} \cup species_{\{x\},y}(\cell{\exo_n.\sigma|\sigma_0}{P'} \comp P'')\\
				&= \{\cell{(\{\coexo_n.\tau\} \cup s(\tau_0))}{^x_y} \mapsto 1\} \cup species_{\{x\},y}(\cell{\exo_n.\sigma|\sigma_0}{P'}) \cup species_{\{x\},y}(P'')\\
				&= \{\cell{(\{\coexo_n.\tau\} \cup s(\tau_0))}{^x_y} \mapsto 1\} \cup \{\cell{s(\exo_n.\sigma|\sigma_0)}{^y_w} \mapsto 1\} \cup species_{\{x,y\},w}(P') \cup species_{\{x\},y}(P'') \\
				&= \{\cell{(\{\coexo_n.\tau\} \cup s(\tau_0))}{^x_y} \mapsto 1\} \cup \{\cell{(s(\exo_n.\sigma) \cup s(\sigma_0))}{^y_w} \mapsto 1\} \cup species_{\{x,y\},w}(P') \cup species_{\{x\},y}(P'') \\
				&= \{\cell{(\{\coexo_n.\tau\} \cup s(\tau_0))}{^x_y} \mapsto 1\} \cup \{\cell{(\{\exo_n.\sigma\} \cup s(\sigma_0))}{^y_w} \mapsto 1\} \cup species_{\{x,y\},w}(P') \cup species_{\{x\},y}(P'')
				\end{align*}
Let $I_1 = \cell{(\{\coexo_n.\tau\} \cup s(\tau_0))}{^x_y}$, $I_2 =
\cell{(\{\exo_n.\sigma\} \cup s(\sigma_0))}{^y_w}$, $J_{P'} =
species_{\{x,y\},w}(P')$, $J_{P''} = species_{\{x\},y}(P'')$; then
$\bamc{P}_x = I_1 \mapsto 1 \oplus I_2 \mapsto 1 \oplus J_{P'} \oplus
J_{P''} \oplus (0,\emptyset,\emptyset) = I_2 \mapsto 1 \oplus J_{P'}
\oplus J_{P''} \oplus (0,S',R') = J_{P'} \oplus J_{P''} \oplus
(0,S'',R'')$ where
\vspace{-1ex}
\begin{align*}
  L' &= reactions(I_1 \mapsto 1, \emptyset) = \emptyset\\
  S' &= S\{I_1 \mapsto 1\}\\
  R' &= init(L', (0,S',\emptyset)) = \emptyset
  \\[1ex]
  L'' &= reactions(I_2 \mapsto 1, S')\\
  &= (\{I_2 \mapsto 1,I_1 \mapsto 1\}, r_n, f, \{\cell{(s(\tau) \cup s(\tau_0) \cup s(\sigma) \cup s(\sigma_0))}{^x_y} \mapsto 1\})\\
  S'' &= S'\{I_2 \mapsto 1\} = \{I_1 \mapsto 1, I_2 \mapsto 1\}\\
  R'' &= init(L'', (0,S'',R')) = \{O_L \mapsto (t_1,a_1)\}
\end{align*}
with $O_L = (\{I_1 \mapsto 1,I_2 \mapsto 1\}, r_n, f, \{\cell{(s(\tau) \cup s(\tau_0) \cup s(\sigma) \cup s(\sigma_0))}{^x_y} \mapsto 1\})$.
Now, the reaction $O$ in $\bamc{P}_x \trs{F, O} T$ is $O_L$. This means that $\bamc{P}_x \trs{F, O} T$ is derived by means of an application of the (\ref{rl:reac}) as follows, 
where $S_1 = \{I_1 \mapsto 1,I_2 \mapsto 1\}$, $S_2 = \{\cell{(s(\tau) \cup s(\tau_0) \cup s(\sigma) \cup s(\sigma_0))}{^x_y} \mapsto 1\}$ and $f = T[w := x]$:
				
\begin{equation*}
\rl{\begin{array}{l}
E \vdash
(0, S'', R'') \trs{a_1, (S_1,r_n,f,S_2)} norm(E' \cup fn(S_2) \vdash
(f(S_2 \oplus ((t_1,S''',R''') \ominus S_1))))
\end{array}
}%
{((S_1,r_n,f,S_2), a_1, t_1) = next(0,S'',R'') \quad (E' \vdash (0,S''',R''')) = cow(E \vdash (0,S'',R''),S_1)}
\end{equation*}
where $S''' = S''$ and $R''' = R''$.
\begin{align*}
& \hspace{-1cm} (f(S_2 \oplus ((t_1,S''',R''') \ominus S_1)))) =\\
			&= f((\cell{s(\tau) \cup s(\tau_0) \cup s(\sigma) \cup s(\sigma_0)}{^x_y}) \oplus ((t_1,S''',R''') \ominus \{I_1 \mapsto 1,I_2 \mapsto 1\})\\
			&= f((\cell{s(\tau) \cup s(\tau_0) \cup s(\sigma) \cup s(\sigma_0)}{^x_y}) \oplus ((t_1, \{I_1 \mapsto 0, I_2 \mapsto 1\}, \{O_L \mapsto (t_2,a_2)\}) \ominus \{I_2 \mapsto 1\})\\
			&= f((\cell{s(\tau) \cup s(\tau_0) \cup s(\sigma) \cup s(\sigma_0)}{^x_y}) \oplus (t_1, \{I_1 \mapsto 0, I_2 \mapsto 0\}, \{O_L \mapsto (t_3,a_3)\})\\
			&= (\cell{s(\tau) \cup s(\tau_0) \cup s(\sigma) \cup s(\sigma_0)}{^x_y}) \oplus (t_1, \{I_1 \mapsto 0, I_2 \mapsto 0\}, \{O_L \mapsto (t_3,a_3)\})\\
			&=  J' \oplus (t_1, \{I_1 \mapsto 0, I_2 \mapsto 0\}, \{O_L \mapsto (t_3,a_3)\})
			\end{align*}
Now let us define $Q$ as $Q = \cell{\sigma|\sigma_0|\tau|\tau_0}{P''} \comp P'$, then $\bamc{Q}_x = species_{\emptyset,x}(Q) \oplus (0,\emptyset,\emptyset)$ where
			\begin{align*}
			species_{\emptyset,x}(Q) &= species_{\emptyset,x}(\cell{\sigma|\sigma_0|\tau|\tau_0}{P''} \comp P')\\
			&= species_{\emptyset,x}(\cell{\sigma|\sigma_0|\tau|\tau_0}{P''}) \cup species_{\emptyset,x}(P')\\
			&= \{s(\cell{\sigma|\sigma_0|\tau|\tau_0)}{^x_y}) \mapsto 1\} \cup species_{\{x\},y}(P'') \cup species_{\emptyset,x}(P')\\
			&= \{\cell{s(\tau) \cup s(\tau_0) \cup s(\sigma) \cup s(\sigma_0)}{^x_y} \mapsto 1\} \cup J_{P''} \cup J_{P'}
			 = J' \cup J_{P''} \cup J_{P'}
			\end{align*}
And hence $Q \equiv \mabc{J_{P'} \oplus J_{P''} \oplus J' \oplus (t_1, \{I_1 \mapsto 0, I_2 \mapsto 0\}, \{O_L \mapsto (t_3,a_3)\})}_x$. It remains to prove that $r_n = \mu_{id}([Q])$. Let us notice that the derivation of $P \trs{} \mu$ is actually as follows:
{\small
\begin{prooftree}
	\AxiomC{$\coexo_n.\tau \trs{} [\coexo_n]_{\tau}$}
	\AxiomC{$\tau_0 \trs{} \mu$}
	\LeftLabel{\!\!\!\!\!(par)}
	\BinaryInfC{$\coexo_n.\tau|\tau_0 \trs{} [\coexo_n]_{\tau} \mpar{\coexo_n.\tau}{\tau_0} \mu$}
	\AxiomC{$\exo_n.\sigma \trs{} [\exo_n]_\sigma$}
	\AxiomC{$\sigma_0 \trs{} \mu''$}
	\RightLabel{(par)}
	\BinaryInfC{$\exo_n.\sigma|\sigma_0 \trs{} [\exo_n]_{\sigma} \mpar{\exo_n.\sigma}{\sigma_0} \mu''$}
	\AxiomC{$P' \trs{} \mu'$}
	\RightLabel{(loc)}
	\BinaryInfC{$\cell{\exo.\sigma|\sigma_0}{P'} \trs{} \mu' \mAt{\exo_n.\sigma|\sigma_0}{P'} ([\exo_n]_\sigma\mpar{\exo_n.\sigma}{\sigma_0}\mu'')$}
	\AxiomC{$P'' \trs{} \mu'''$}
	\LeftLabel{\!\!\!\!\!\!\!(comp)}
	\BinaryInfC{\!\!\!\!\!\!\!$\cell{\exo_n.\sigma|\sigma_0}{P'} \comp P'' \trs{} 
	(\mu' \mAt{\exo_n.\sigma|\sigma_0}{P'} ([\exo_n]_\sigma\mpar{\exo_n.\sigma}{\sigma_0}\mu'')) 
	\mtimes{\cell{\exo_n.\sigma|\sigma_0}{P'}}{P''} \mu'''$}
	\LeftLabel{\!\!\!\!\!(loc)}
	\BinaryInfC{$\cell{\coexo_n.\tau|\tau_0}{\cell{\exo_n.\sigma|\sigma_0}{P'} \comp P''} \trs{} \nu$}
\end{prooftree}
}
\noindent 
where $\nu = ((\mu' \mAt{\exo_n.\sigma|\sigma_0}{P'} ([\exo_n]_\sigma\mpar{\exo_n.\sigma}{\sigma_0}\mu'')) 
	\mtimes{\cell{\exo_n.\sigma|\sigma_0}{P'}}{S} \mu''') 
	\mAt{\coexo_n.\tau|\tau_0}{\cell{\exo_n.\sigma|\sigma_0}{P'} \comp P''} 
	([\coexo_n]_{\tau} \mpar{\coexo_n.\tau}{\tau_0} \mu)$, \\
$\mu_1 = (\mu' \mAt{\exo_n.\sigma|\sigma_0}{P'}
([\exo_n]_\sigma\mpar{\exo_n.\sigma}{\sigma_0}\mu''))
\mtimes{\cell{\exo_n.\sigma|\sigma_0}{P'}}{P''} \mu'''$ and $\mu_2 =
[\coexo_n]_{\tau} \mpar{\coexo_n.\tau}{\tau_0} \mu$.
Then:
\renewcommand{\frac}[2]{#1/#2}
\begin{align*}
&\nu_{id}([\cell{\sigma|\sigma_0|\tau|\tau_0}{P''} \comp P']) 
=	(\mu_1~ \mAt{\coexo_n.\tau|\tau_0}{\cell{\exo_n.\sigma|\sigma_0}{P'} \comp P''}~ \mu_2)_{id}([\cell{\sigma|\sigma_0|\tau|\tau_0}{P''} \comp P'])
	\\
&= {\mu_1}_{id}([\cell{\sigma|\sigma_0|\tau|\tau_0}{P''} \comp P']) + 
\frac{{\mu_1}_{ex_n}([\sigma|\sigma_0] \times [P'] \times [P'']) \cdot {\mu_2}_{\coexo_n} ([\tau|\tau_0])}{r_n}\\
&= \frac{{\mu_1}_{ex_n}([\sigma|\sigma_0] \times [P'] \times [P'']) \cdot {\mu_2}_{\coexo_n} ([\tau|\tau_0])}{r_n}\\
&= \frac{({(\mu' \mAt{\exo_n.\sigma|\sigma_0}{P'} ([\exo_n]_\sigma\mpar{\exo_n.\sigma}{\sigma_0}\mu''))}_{ex_n}([\sigma|\sigma_0] \times [P'] \times [P'']) +
	{\mu'''_{ex_n}([\sigma|\sigma_0] \times [P'] \times [P''])})
\cdot {\mu_2}_{\coexo_n} ([\tau|\tau_0])}{r_n}
\\
&= \frac{( ([\exo_n]_{\sigma}~ \mpar{\exo_n.\sigma}{\sigma_0}~ \mu'')_{\exo_n}([\sigma|\sigma_0]) +
	{\mu'''_{ex_n}([\sigma|\sigma_0] \times [P'] \times [P''])}) \cdot
	{([\coexo_n]_{\tau} \mpar{\coexo_n.\tau}{\tau_0} \mu)}_{\coexo_n} ([\tau|\tau_0])}{r_n}\\
&= \frac{( ([\exo_n]_{\sigma})_{\exo_n} ([\sigma|\sigma_0]) + \mu''_{\exo_n} ([\sigma|\sigma_0]) +
 	{\mu'''_{ex_n}([\sigma|\sigma_0] \times [P'] \times [P''])}) \cdot
 	(([\coexo_n]_{\tau})_{\coexo_n} ([\tau|\tau_0]) + \mu_{\coexo_n} ([\tau|\tau_0]) )
 	}{r_n}\\
&= \frac{(r_n + \mu''_{\exo_n} ([\sigma|\sigma_0]) + {\mu'''_{ex_n}([\sigma|\sigma_0] \times [P'] \times [P''])} ) \cdot
	(r_n + \mu_{\coexo_n} ([\tau|\tau_0]) )}{r_n} = r_n
\end{align*}

where the last equivalence holds because $\mu''_{\exo_n} ([\sigma|\sigma_0]) = \mu'''_{ex_n}([\sigma|\sigma_0] \times [P'] \times [P'']) = \mu_{\coexo_n} ([\tau|\tau_0]) = 0$ because we assumed that the reaction does not involve either $\sigma_0$ nor $\tau_0$.

\end{document}